\pdfoutput=1
\documentclass[11pt]{article}

\usepackage{filecontents}
\usepackage{titling}
\usepackage[utf8]{inputenc}
\usepackage[english]{babel}
\usepackage{graphicx}
\usepackage{subcaption}
\usepackage{amsmath, amsfonts, amssymb, amsthm, bm, graphicx, mathtools, enumerate,multirow}
\usepackage[affil-it]{authblk}
\usepackage[round]{natbib}
\usepackage{bbm}
\usepackage{booktabs}
\usepackage{enumitem}
\usepackage{fixltx2e}
\usepackage{setspace}
\usepackage{xcolor}
\usepackage[CJKbookmarks=true,
bookmarksnumbered=true,
bookmarksopen=true,
colorlinks=true,
citecolor=blue,
linkcolor=blue,
anchorcolor=blue,
urlcolor=blue]{hyperref}
\usepackage[letterpaper, left=1.2truein, right=1.2truein, top = 1.2truein,bottom = 1.2truein]{geometry}
\usepackage[ruled, vlined, lined, commentsnumbered,linesnumbered]{algorithm2e}
\usepackage{prettyref,soul}

\setlist{nolistsep}
\allowdisplaybreaks

\usepackage{xr}
\externaldocument{supp-plain}

\pdfminorversion=4

\newcommand{\supp}[1]{{Section #1 in the supplementary material}}

\DeclareMathOperator*{\argmin}{argmin}

\newtheorem{lemma}{Lemma}

\newtheorem{thm}{Theorem}

\newtheorem{assumption}{Assumption}
\theoremstyle{definition}
\newtheorem{remark}{Remark}
\usepackage{float}
\floatstyle{plaintop}
\restylefloat{table}

\def\pr{\mathbb{P}}
\def\E{\mathbb{E}} 
 
\def\Var{\mathrm{Var}}

\newcommand{\tp}{\intercal}

\newcommand{\bigO}{\ensuremath{\mathop{}\mathopen{}\mathcal{O}\mathopen{}}}
\newcommand{\smallO}{ \scalebox{0.7}{$\mathcal{O}$}}
\newcommand{\bigOp}{\bigO_\mathrm{p}}
\newcommand{\smallOp}{\smallO_\mathrm{p}}
\newcommand{\ind}{\mbox{$\perp\!\!\!\perp$}}
\newcommand{\Hscr}{{\mathcal{H}}}

\newcommand{\Kscr}{{\mathcal{K}}}

\newcommand{\Fscr}{{\mathcal{F}}}

\newcommand{\kernel}{{\kappa}}
\newcommand{\logit}{{\mathrm{logit}}}

\newcommand{\Ak}{A_1}

\newcommand{\nuk}{{\nu_1}}

\newcommand{\tK}{\tilde{K}_{h}}

\newcommand{\ate}{\textsc{ATE\textsubscript{RKHS}}}
\newcommand{\cate}{\textsc{Proposed}}
\newcommand{\ipw}{\textsc{IPW}}

\newcommand{\de}{\mathrm{d}}

\newcommand{\Noweights}{\textsc{REG}}

\newcounter{ucnt}
\newcommand{\newu}{
	\refstepcounter{ucnt}
	\ensuremath{C_{\theucnt}}
}
\newcommand{\oldu}[1]{\ensuremath{C_{\ref*{#1}}}}

\newcounter{lcnt}

\makeatletter

\newcommand{\ltxlabel}[1]{\ltx@label{#1}}

\makeatother

\begin{document}

\def\spacingset#1{\renewcommand{\baselinestretch}%
{#1}\small\normalsize}


  \title{\bf Estimation of Partially Conditional Average Treatment Effect by Hybrid Kernel-covariate Balancing}

\author[1]{Jiayi Wang}
\author[1]{Raymond K. W. Wong}
\author[2]{Shu Yang}
\author[3]{Kwun Chuen Gary Chan}
\affil[1]{Department of Statistics, Texas A\&M University}
\affil[2]{Department of Statistics, North Carolina State University}
\affil[3]{Department of Biostatistics, University of Washington}

\date{}
\maketitle

\bigskip
\begin{abstract}
  We study nonparametric estimation for the partially conditional average treatment effect, defined as the treatment effect function over an interested subset of confounders. 
  We propose a hybrid kernel weighting estimator where the 
  weights
  aim to control
  the balancing error of
  any function of the confounders
  from a reproducing kernel Hilbert space after kernel smoothing over the subset of interested variables. In addition, we present an augmented version of our estimator which can incorporate estimations of outcome mean functions. Based on the representer theorem, gradient-based algorithms can be  applied for solving the corresponding infinite-dimensional optimization problem. Asymptotic properties are studied 
  without any smoothness assumptions for propensity score function or the need of data splitting, relaxing certain existing stringent assumptions. 
  The  numerical performance of the proposed estimator is demonstrated by a simulation study and an application to  the effect of a mother's smoking on a baby's birth weight conditioned on the mother's age.
\end{abstract}

\noindent%
{\it Keywords:} Augmented weighting estimator; Causal inference; Reproducing kernel Hilbert space; Treatment effect heterogeneity

\spacingset{1}

\section{Introduction}\label{sec:intro}
Causal inference often concerns not only the average effect of the treatment
on the outcome but also the conditional average treatment effect (CATE) given
a set of individual characteristics, when treatment effect heterogeneity is expected or of interest.  
Specifically, let $T\in\{0,1\}$ be the treatment assignment, $0$ for control
and $1$ for active treatment, $X\in\mathcal{X}\subset\mathbb{R}^{d}$
 a vector of all pre-treatment confounders, and $Y$  the outcome of interest.
Following the potential outcomes framework, let $Y(t)$ be the potential
outcome, possibly contrary to fact, had the unit received treatment
$t\in\{0,1\}$. Then, the individual treatment effect is $Y(1)-Y(0)$,
and the (fully) CATE can be characterized through $\gamma(x)=\E\{Y(1)-Y(0)\mid X=x\}$, $x\in\mathcal{X}$.
Due to the fundamental problem in causal inference that the potential
outcomes are not jointly observable, identification and estimation
of the CATE in observational studies require further assumptions.
A common assumption is the no unmeasured confounding (UNC) assumption,
requiring $X$ to capture all confounding variables that affect the
treatment assignment and outcome. This often results in a multidimensional
$X$. Given the UNC assumption, many methods have been proposed to
estimate $\gamma(x)$ \citep{nie2017quasi,wager2018estimation,kennedy2020optimal}.
However, in clinical settings, researchers may only concern the variation
of treatment effect over the change of a small subset of covariates
$V\in\mathcal{V}\subseteq\mathcal{X}$, not necessarily the full set
$X$. For example, researchers are interested in estimating the CATE
of smoking (treatment) on birth weight (outcome) given mother's age
but not mother’s educational attainment, although this variable can
be a confounder. In this article, we focus on estimating $\tau(v)=\E\{\gamma(X)\mid V=v\}$ for $v\in\mathcal{V}$, which we refer to as the partially conditional average treatment effect (PCATE).
When $V$ is taken to be $X$, $\tau(v)$ becomes the fully conditional average treatment effect (FCATE)
$\gamma(x)$.
Despite our major focus on cases when $\mathcal{V}$ is a proper subset of $\mathcal{X}$,
the proposed method in this paper does not exclude the setting with $\mathcal{V}=\mathcal{X}$,
which results in the FCATE.

When $V$ contains discrete covariates, one can
divide the whole sample into different groups by constricting the
same values of discrete covariates of $V$ in the same group. Then, as long as there are enough samples in such stratum, $\tau(v)$ can be obtained
by estimating the PCATE over the remaining continuous covariates in $V$
separately for every stratum.
Therefore, for simplicity, we focus on the setups with continuous $V$ \citep{abrevaya2015estimating,lee2017doubly,fan2020estimation,zimmert2019nonparametric, semenova2020debiased} while keeping in mind that
the proposed method can be used to handle $V$ that consists of continuous and discrete variables.
The typical estimation strategy involves two steps.
The first step is to estimate nuisance
parameters including the propensity score function and the outcome
mean functions for the construction of adjusted responses (through weighting and augmentation) that are (asymptotically) unbiased for $\gamma(x)$ given $X=x$. 
The nuisance
parameters  can be estimated by parametric, nonparametric, or even
machine learning models.
This step serves to adjust for confounding biases.
In the second step, 
existing methods typically adopt  nonparametric regression over $V$ using the adjusted responses obtained from the first step. 
However,
these methods suffer from many drawbacks. Firstly, all parametric methods
are potentially sensitive to
model misspecification especially when 
the CATE is complex. On the other hand,
although nonparametric and machine learning methods are flexible,
the first-step  estimator of $\gamma(X)$ with high-dimensional $X$ 
requires stringent assumptions for the possibly low-dimensional PCATE estimation to achieve the optimal convergence rate.  
For example, \cite{abrevaya2015estimating}, \cite{zimmert2019nonparametric}, \cite{fan2020estimation} and \cite{semenova2020debiased} specify restrictive requirements
for the convergence rate of the estimators of the nuisance parameters.
Detailed discussions are provided in Remarks
\ref{rmk:ipw} and \ref{rmk:aipw2}.

Instead of separating confounding adjustment and kernel smoothing in two steps, we propose a new
framework that unifies the confounding adjustment and kernel smoothing
in the weighting step. In particular, we generalize
the idea of covariate balancing weighting in the average treatment
effect (ATE) estimation literature \citep{qin2007empirical,hainmueller2012entropy,imai2014covariate,zubizarreta2015stable,wong2018kernel}.
This generalization, however, is non-trivial because we require covariate balancing
in terms of flexible outcome models between the two treatment groups
given all possible values of $v$. We assume that the outcome models
lie in the reproducing kernel Hilbert space (RKHS, \citealp{wahba1990spline}),
a fairly flexible class of functions of $X$. We then propose covariate
function balancing (CFB) weights that are capable of controlling the balancing
error with respect to the $L_{2}$-norm of any function with a bounded
norm over the RKHS after kernel smoothing.
The construction of the proposed weights specifically involves two
kernels --- the reproducing kernel of the RKHS and the kernel of the kernel smoothing --- and the goal of these weights can be understood as to balance covariate functions generated by the hybrid of these two kernels.
Our method does not
require any smoothness assumptions on the propensity score
model, in sharp contrast to existing methods, and only require mild smoothness assumptions for the outcome
models. Invoking the well-known representer theorem, a finite-dimensional
representation form of optimization objective can be derived and it
can be solved by a gradient-based algorithm. Asymptotic properties
of the proposed estimator are derived under the complex dependency
structure of weights and kernel smoothing. In addition, our proposed
weighting estimator can be slightly modified to incorporate the estimation
of the outcome mean functions, similar to the augmented inverse probability weighting (AIPW) estimator. We show that
the augmentation of the outcome models relaxes the selection of tuning
parameters theoretically.

The rest of paper is organized as follows. Section \ref{sec:setup} provides the basic setup for the CATE estimation. Section \ref{sec: estimator} introduces our proposed CFB weighting estimator, together with the computation techniques. Section \ref{sec:aug_estimator} introduces an augmented version of our proposed estimator. In Section \ref{sec:theory}, the asymptotic properties of the proposed estimators are developed.  A simulation study and a real data application are presented in Sections \ref{sec:sim} and \ref{sec:real}, respectively.
\section{Basic setup\label{sec:setup}}

Suppose $\{(T_{i},Y_{i}(1),Y_{i}(0),X_{i}):i=1,\ldots,N\}$ are $N$ independent
and identically distributed copies of $\{T,Y(1),Y(0),X\}$. We assume that the observed outcome is $Y_{i}=T_{i}Y_{i}(1)+(1-T_{i})Y_{i}(0)$
for $i=1,\ldots,N$. Thus, the observed data $\{(T_{i},Y_{i},X_{i}):i=1,\dots,N\}$
are also independent and identically distributed. For simplicity,
we drop the subscript $i$ when no confusion arises. 

We focus on the setting satisfying treatment ignorability in observational
studies \citep{rosenbaum1983central}.
\begin{assumption}[No unmeasured confounding]\label{assump:TAignorability}
$\{Y(1),Y(0)\}\ind T\mid X.$ \end{assumption}
Assumption~\ref{assump:TAignorability} rules out latent confounding
between the treatment and outcome. In observational studies, its plausibility
relies on whether or not the observed covariates $X$ include all
the confounders that affect the treatment as well as the outcome.

Most of the existing works \citep{nie2017quasi,wager2018estimation,kennedy2020optimal,semenova2020debiased}
focus on estimating the CATE given the full set of $X$, i.e., $\gamma(x):=\E\{Y(1)-Y(0)\mid X=x\}$,
$x\in\mathcal{X}$, which we refer to as the FCATE.
However, to ensure Assumption~\ref{assump:TAignorability} holds, $X$ is often
multidimensional, leading to a multidimensional CATE function $\gamma(x)$
that is challenging to estimate. Indeed, it is common that some covariates
in $X$ are simply confounders but not treatment effect modifiers
of interest. Therefore, a more sensible way is to allow the conditioning
variables to be a subset of confounders \citep{abrevaya2015estimating,zimmert2019nonparametric,fan2020estimation}.
Instead of $\gamma(x)$, we focus on estimating the PCATE
\[
\tau(v)=\E\left\{ Y(1)-Y(0)\mid V=v\right\} ,\quad v\in\mathcal{V}\subseteq\mathcal{X},
\]
where $V$ is a subset of $X$.
It is worth noting that $V=X$ is also allowed, and therefore $\gamma(x)$ can be estimated by our framework.
For simplicity, we assume $V$ is a continuous random vector for the rest of
the paper. 
When $V$ contains discrete random variables, one can divide the sample into different strata, of which the units have the same level of discrete covariates. Then $\tau(v)$ can be estimated by estimating the PCATE at every strata.

In addition to Assumption \ref{assump:TAignorability}, we require sufficient overlap between the treatment groups. Let $\pi(x)=\pr(T=1\mid X=x)$ be the propensity score. Throughout
this paper, we also assume that the propensity score is strictly bounded above
zero and below one to ensure overlap.
\begin{assumption} \label{assum: prop} The propensity score $\pi(\cdot)$
is uniformly bounded away from zero and one. That is, there exist a constant
$\newu\ltxlabel{prop}>0$, such that 
$1/  \oldu{prop} \le \pi(x) \le (1-1/ \oldu{prop})$
for all $x\in\mathcal{X}$.
\end{assumption}

Under Assumptions~\ref{assump:TAignorability} and \ref{assum: prop},
$\tau(v)$ is identifiable based on the following formula
\[
\tau(v)=\E\left\{ Y(1)-Y(0)\mid V=v\right\} =\E\left\{ \left.\frac{TY}{\pi(X)}-\frac{(1-T)Y}{1-\pi(X)}~\right\vert ~V=v\right\} .
\]
First suppose $\pi(X_i), i=1,\dots,N,$ are known.  Common procedures  construct adjusted responses $Z_{i}=T_{i}Y_{i}/\pi(X_{i})-(1-T_{i})Y_{i}/\{1-\pi(X_{i})\}$  and apply kernel smoother to the data
$\{(V_{i},Z_{i}),i=1,\dots,N\}$. Specifically, let $K(v)$ be a kernel function
and $h>0$ be a bandwidth parameter (with technical conditions specified in Section \ref{sec:assum}). The above strategy leads to the
following estimator for $\tau(v)$: 
\begin{align}
\frac{1/(Nh^{d_{1}})\sum_{i=1}^{N}K\left\{ (V_{i}-v)/h\right\} Z_{i}}{1/(Nh^{d_{1}})\sum_{j=1}^{N}K\left\{ (V_{j}-v)/h\right\} }=\frac{1}{N}\sum_{i=1}^{N}\tK(V_{i},v)Z_{i}\label{eqn:cate}
\end{align}
where
\[
\tK(v_{1},v_{2})=\frac{\frac{1}{h^{d_{1}}}K\{(v_{1}-v_{2})/h\}}{\frac{1}{N}\sum_{j=1}^{N}\frac{1}{h^{d_{1}}}K\{(V_{j}-v_{2})/h\}}.
\]
In observational studies, the propensity scores $\pi(X_i),i=1,\dots,N$,
are often unknown. \citet{abrevaya2015estimating} propose to estimate
these scores using another kernel smoother, and
construct the adjusted responses based on the estimated propensity scores.
There are two drawbacks with this approach.
First, it is well known that
inverting the estimated propensity scores can result in instability,
especially when some of the estimated propensity scores are close
to zero or one. Second, this procedure relies on the propensity score model to
be correctly specified or sufficiently smooth to approximate well. 

To overcome these issues, instead of obtaining the weights by inverting
the estimated propensity scores, we focus on estimating the proper
weights directly. In the next section, we adopt the idea of covariate
balancing weighting, which has been recently studied in the context of average treatment effect (ATE) estimation \citep[e.g.,][]{hainmueller2012entropy, imai2014covariate, zubizarreta2015stable, chan2016globally,  wong2018kernel, zhao2019covariate,kallus2020generalized, wang2020minimal}.

\section{Covariate function balancing weighting for PCATE estimation}
\label{sec: estimator}
\subsection{Motivation}

To motivate the proposed estimator, suppose we are given the covariate
balancing weights $\{\hat{w}_{i}:i=1,\dots,N\}$. We express the adjusted
response as
\begin{align}
Z_{i}=\hat{w}_{i}T_{i}Y_{i}-\hat{w}_{i}(1-T_{i})Y_{i},\quad i=1,\dots,N.\label{eqn:adjusted}
\end{align}
Combining \eqref{eqn:cate} and \eqref{eqn:adjusted}, the 
estimator of $\tau(v)$ is 
\begin{align}
\hat{\tau}(v)=\frac{1}{N}\sum_{i=1}^{N}T_{i}\hat{w}_{i}\tK(V_{i},v)Y_{i}-\frac{1}{N}\sum_{i=1}^{N}(1-T_{i})\hat{w}_{i}\tK(V_{i},v)Y_{i}.\label{eqn:cate_est}
\end{align}
One can see that the estimator \eqref{eqn:cate_est} is a difference
between two terms, which are the estimates of $\mu_{1}(v)=\E\{Y(1)\mid V=v\}$
and $\mu_{0}(v)=\E\{Y(0)\mid V=v\}$, respectively. For simplicity, we focus on
the first term and discuss the estimation of the corresponding weights
$\{w_{i}:T_{i}=1\}$ in the treated group. The same discussion applies
to the second term and the estimation of weights in the control group.

We assume $Y_{i}(1)=m_{1}(X_{i})+\varepsilon_{i}$ such that the $\varepsilon_{i}$'s
are independent random errors with $\E(\varepsilon_{i}\mid X_{i})=0$
and $\E(\epsilon_{i}^{2}\mid X_{i})\leq\sigma_{0}^{2}<\infty$. 
Focusing on the first term of \eqref{eqn:cate_est}, we obtain the
following decomposition 
\begin{align}
\begin{split}\frac{1}{N}\sum_{i=1}^{N}T_{i}\hat{w}_{i}\tK(V_{i},v)Y_{i} & =\frac{1}{N}\sum_{i=1}^{N}T_{i}\hat{w}_{i}\tK(V_{i},v)m_{1}(X_{i})+\frac{1}{N}\sum_{i=1}^{N}T_{i}\hat{w}_{i}\tK(V_{i},v)\varepsilon_{i}\\
 & =\frac{1}{N}\sum_{i=1}^{N}(T_{i}\hat{w}_{i}-1)\tK(V_{i},v)m_{1}(X_{i})+\frac{1}{N}\sum_{i=1}^{N}T_{i}\hat{w}_{i}\tK(V_{i},v)\varepsilon_{i}\\
 & \quad+\left[\frac{1}{N}\sum_{i=1}^{N}\tK(V_{i},v)m_{1}(X_{i})-\mu_{1}(v)\right]+\mu_{1}(v).
\end{split}
\label{eqn:decom}
\end{align}
In the last equality, only the first two terms depend on the weights.
The second term $N^{-1}\sum_{i=1}^{N}T_{i}\hat{w}_{i}\tK(V_{i},v)\varepsilon_{i}$
will be handled by controlling the variability of the weights. The
challenge lies in controlling the first term, which requires the control
of the (empirical) balance of a kernel-weighted function class because
$m_{1}(X_{i}), i=1,\dots, N$, are unknown. 
This requirement makes
achieving covariate balance 
significantly more challenging than
those for estimating the ATE, \emph{i.e.}, when $V$ is deterministic \citep[e.g.,][]{hainmueller2012entropy, imai2014covariate, zubizarreta2015stable, chan2016globally,  wong2018kernel, zhao2019covariate,kallus2020generalized, wang2020minimal},
for multiple reasons: (i) covariate balance is required for all $v$
in a continuum, and (ii) the bandwidth $h$ in kernel smoothing is required to diminish 
with respect to the sample size $N$.

\subsection{Balancing via empirical residual moment operator\label{sec:estimation} }

Suppose $m_{1}\in\mathcal{H}$, where $\mathcal{H}$ is an RKHS with
reproducing kernel $\kernel$ and norm $\|\cdot\|_{\mathcal{H}}$.
Also, let the squared empirical norm be $\|u\|_{N}^{2}=(1/N)\sum_{i=1}^{N}\{u(X_{i})\}^{2}$
for any $u\in\mathcal{H}$. Intuitively, from the first term of $(\ref{eqn:decom})$,
we aim to find weights $w=\{w_{i}:T_{i}=1\}$ to ensure the following
function balancing criteria: 
\begin{align*}
\frac{1}{N}\sum_{i=1}^{N}T_{i}\hat{w}_{i}u(X_{i})\tK(V_{i},v)\approx\frac{1}{N}\sum_{i=1}^{N}u(X_{i})\tK(V_{i},v),
\end{align*}
for all $u\in\mathcal{H},$ where the left and right hand sides are regarded as functions
of $v$. To quantify such an approximation, we define the operator
$\mathcal{M}_{N,h,w}$ mapping an element of $\mathcal{H}$ to a function
on $\mathcal{V}$ by 
\[
\mathcal{M}_{N,h,w}(u,\cdot)=\frac{1}{N}\sum_{i=1}^{N}(T_{i}w_{i}-1)u(X_{i})\tK(V_{i},\cdot),
\]
which we call the empirical residual moment operator with respect to the weights
in $w$. 
The approximation and hence the balancing error can be measured by
\begin{align}
\|\mathcal{M}_{N,h,w}(u,\cdot)\|^{2},
\end{align}
where $\|f\|$ is a generic metric applied to a function $f$ defined
on $\mathcal{V}$.
Typical examples of a metric are $L_{\infty}$-norm ($\|\cdot\|_{\infty}$), $L_{2}$-norm
($\|\cdot\|_{2}$) and empirical norm ($\|\cdot\|_{N}$). If one has
non-uniform preference over $\mathcal{V}$, weighted $L_{2}$-norm
and weighted empirical norm are also applicable. 
In the following, we focus on the balancing error based on $L_{2}$-norm:
\begin{align}
S_{N,h}(w,u) & =\|\mathcal{M}_{N,h,w}(u,\cdot)\|_{2}^{2}.
\label{eqn:S_form}
\end{align}
We will return to the discussion of other norms in Section \ref{sec:theory}.
Ideally, our target is to minimize $\sup_{u\in\Hscr}S_{N,h}(w,u)$
uniformly over a sufficiently complex space $\Hscr$. As soon as one
attempts to do this, one may find that $S_{N,h}(w,tu)=t^{2}S_{N,h}(w,u)$
for any $t\ge0$, which indicates a scaling issue about $u$. Therefore,
we will standardize the magnitude of $u$ and restrict the space to
$\Hscr_{N}=\{u\in\Hscr:\|u\|_{N}^{2}=1\}$ as in \citet{wong2018kernel}.
Also, to overcome  overfitting, 
we add a penalty on $u$ with respect to $\|\cdot\|_{\Hscr}$ and
focus on controlling the balancing error over smoother functions.
Inspired by the discussion for \eqref{eqn:decom}, we also introduce
another penalty term 
\begin{align}
R_{N,h}(w)=\frac{1}{N}\sum_{i=1}^{N}\|T_{i}w_{i}\tilde{K}_{h}(V_{i},\cdot)\|_{2}^{2},\label{eqn:obj_V}
\end{align}
to control the variability of
the weights.

In summary, given any $h>0$, our CFB weights $\hat{w}$ is constructed
as follows: 
\begin{align}
\hat{w}=\argmin_{w}\left[\sup_{u\in\Hscr_{N}}\left\lbrace S_{N,h}(w,u)-\lambda_{1}\|u\|_{\Hscr}^{2}\right\rbrace +\lambda_{2}R_{N,h}(w)\right],\label{eqn:obj_all}
\end{align}
where $\lambda_{1}$ and $\lambda_{2}$ are tuning parameters ($\lambda_{1}>0$ and $\lambda_{2}>0$).
Note that \eqref{eqn:obj_all} does not depend on the weights $\{w_{i}:T_{i}=0\}$
of the control group, and the optimization is only performed with
respect to $\{w_{i}:T_{i}=1\}$.

\begin{remark}
By standard representer theorem, we can show that the solution $\tilde{u} = \hat{u} /\|\hat{u}\|_N$ of the inner optimization satisfies that $\hat{u}$ belongs to $\mathcal{K}_N=\mathrm{span}\{\kernel(X_{i},\cdot):i=1,\dots,N\}$.
See also \supp{\ref{sec:reparameterization}}.
Therefore, by the definition of $\mathcal{M}_{N,h,w}$, the weights are determined by achieving balance of the covariate functions generated by the hybrid of the two reproducing kernel $\kappa$ and the smoothing kernel $K$.
\end{remark}

\begin{remark}
\label{rmk:ate_rkhs}
\cite{wong2018kernel}  adopt a similar optimization form as in \eqref{eqn:obj_all} to obtain  weights.
The key difference between their estimator and ours is the choice of balancing error
tailored to the target quantity.
In \cite{wong2018kernel}, the choice of balancing error is
$\{\sum_{i=1}^N(T_iw_i-1)u(X_i)/N\}^2$, which is designed for estimating the {\it scalar} ATE.
There is no guarantee that the resulting weights will ensure  enough balance for
the estimation of the PCATE, {\it a function of $v$}.
Heuristically, one can regard the balancing error in \cite{wong2018kernel} as
the limit of $S_{N,h}$ as $h\rightarrow\infty$.
For finite $h$, two fundamental difficulties emerge that do not exist in \cite{wong2018kernel}.
First, $\mathcal{M}_{N,h,w}(u,v)$ changes with $v$, and so the choice of $S_{N,h}$
involves a metric for a function of $v$ in \eqref{eqn:S_form}.
This is directly related to the fact that our target is a function (PCATE) instead of a scalar (ATE).
For reasonable metrics, the resulting balancing errors measure imbalances over
all (possibly infinite) values of $v$, which is significantly more difficult than the imbalance control required for ATE.
Second, for each $v$, the involvement of kernel function in $\mathcal{M}_{N,h,w}$
suggests that the effective sample size used in the corresponding balancing is much smaller than $\sum^N_{i=1} T_i$.
There is no theoretical guarantee for the weights of \cite{wong2018kernel} 
to ensure enough balance required for the PCATE,
since the proposed weights are designed to balance a function instead of a scalar. 
We show that the proposed CFB weighting estimator achieves desirable properties both theoretically (Section \ref{sec:theory}) and empirically (Section \ref{sec:sim}). 
\end{remark}

\subsection{Computation}\label{sec:com}

Applying the standard representer theory,
\eqref{eqn:obj_all} can be reformulated as
\begin{align}
\hat{w}=\argmin_{w\ge1}\left[\sigma_{1}\left\lbrace \frac{1}{N}P^{\tp}\mathrm{diag}(T\circ w-J)G_{h}\mathrm{diag}(T\circ w-J)P-N\lambda_{1}D^{-1}\right\rbrace +\lambda_{2}R_{N,h}(w)\right],\label{eqn:obj_final}
\end{align}
where $\circ$ is the element-wise
product of two vectors, $J= (1,1,\dots,1)^\tp$, $\sigma_{1}(A)$ represents the maximum eigenvalue of a symmetric
matrix $A$,  $P \in \mathbb{R}^{N\times r}$ consists of the singular vectors of gram matrix $M:=[\kernel(X_{i},X_{j})]_{i,j=1}^{N}\in\mathbb{R}^{N\times N}$ of rank $r$, $D\in \mathbb{R}^{r\times r}$ is the diagonal matrix such that $M = P D P^\tp$, and 
\[
G_{h}=\left[\begin{array}{ccc}
\int_{\mathcal{V}}\tilde{K}_{h}(V_{1},v)\tilde{K}_{h}(V_{1},v)\de v & \cdots & \int_{\mathcal{V}}\tilde{K}_{h}(V_{1},v)\tilde{K}_{h}(V_{N},v)\de v\\
\vdots & \ddots & \vdots\\
\int_{\mathcal{V}}\tilde{K}_{h}(V_{N},v)\tilde{K}_{h}(V_{1},v)\de v & \cdots & \int_{\mathcal{V}}\tilde{K}_{h}(V_{N},v)\tilde{K}_{h}(V_{N},v)\de v
\end{array}\right]\in\mathbb{R}^{N\times N}.
\]
The detailed derivation can be found in \supp{\ref{sec:reparameterization}}.

As for the computation, Lemma \ref{lemma:convex}, whose proof can be found in \supp{\ref{sec:proof_convex}}, indicates that the underlying optimization is convex.
\begin{lemma} \label{lemma:convex}
The optimization \eqref{eqn:obj_all} is convex.
\end{lemma}
Therefore, generic convex optimization algorithms are applicable.
We note that the corresponding gradient
has a closed-form expression\footnote{when the maximum eigenvalue in the objective function is of multiplicity 1}. Thus, gradient based algorithms can be applied efficiently to solve this problem.

Next we discuss several practical strategies to speed up the optimization.
When optimizing \eqref{eqn:obj_final},
we need to compute
the dominant eigenpair
of an $r\times r$ matrix (for computing the gradient).
Since common choices of the reproducing kernel
$\kernel$ are smooth, the corresponding Gram matrix $M$
can be approximated well by a low-rank matrix. When $N$ is large, to facilitate
computation, one can use an $M$ with $r$ much smaller than $N$,
such that the eigen decomposition of gram matrix $M$ 
approximately holds.
This would significantly reduce the burden of computing the dominant eigenpair of the $r\times r$
matrix.
Although the form of $G_{h}$
may seem complicated, this does not change with $w$.
Therefore, for each $h$, we can precompute $G_{h}$ once
at the beginning of an algorithm for the optimization \eqref{eqn:obj_final}.
However, when the integral $g_h(v_1, v_2) = \int_{\mathcal{V}} \tilde{K}_h(v_1, v) \tilde{K}_h(v_2, v) dv $ does not possess a known expression, one generally has to perform a large number of numerical integrations
for the computation of $G_h$, when $N$ is large.
But, for smooth choices of $K$,
$g_h$ is also a smooth function.
When $N$ is large, we could evaluate $g_h(V_i,V_j)$, $i \in S_1, j \in S_2$ at smaller subsets $S_1$ and $S_2$. Then typical interpolation methods \citep{harder1972interpolation} can be implemented to approximate unevaluated integrals in $G_{h}$
to ease the computation burden.

\section{Augmented estimator }
\label{sec:aug_estimator}
Inspired by the augmented inverse propensity weighting (AIPW) estimators
in the ATE literature, we also propose an augmented estimator
that directly adjusts for the outcome models $m_{1}(\cdot)$ and $m_{0}(\cdot)$.

Recall that the outcome regression functions $m_{1}(\cdot)$ and $m_{0}(\cdot)$ are assumed
to be in an RKHS $\Hscr$, kernel-based estimators $\hat{m}_{1}(\cdot)$
and $\hat{m}_{0}(\cdot)$ can be employed. We can then perform augmentation and obtain the
adjusted response $Z_{i}$ in \eqref{eqn:adjusted} as
\begin{align}
Z_{i}=w_{i}T_{i}\{Y_{i}-\hat{m}_{1}(X_{i})\}+\hat{m}_{1}(X_{i})-\left[w_{i}(1-T_{i})\left\{ Y_{i}-\hat{m}_{0}(X_{i})\right\} +\hat{m}_{0}(X_{i})\right].
\end{align}
Correspondingly, the decomposition in \eqref{eqn:decom} becomes

\begin{align*}
 & \frac{1}{N}\sum_{i=1}^{N}\tilde{K}_{h}(V_{i},v)\hat{m}_{1}(X_{i})+\frac{1}{N}\sum_{i=1}^{N}T_{i}\hat{w}_{i}\tilde{K}_{h}(V_{i},v)\{Y_{i}-\hat{m}_{1}(X_{i})\}\\
= & \frac{1}{N}\sum_{i=1}^{N}(1-T_{i}\hat{w}_{i})\tilde{K}_{h}(V_{i},v)\hat{m}_{1}(X_{i})+\frac{1}{N}\sum_{i=1}^{N}T_{i}\hat{w}_{i}\tilde{K}_{h}(V_{i},v)m_{1}(X_{i})+\frac{1}{N}\sum_{i=1}^{N}T_{i}\hat{w}_{i}\tilde{K}_{h}(V_{i},v)\epsilon_{i}\\
 =& \frac{1}{N}\sum_{i=1}^{N}(T_{i}\hat{w}_{i}-1)\tilde{K}_{h}(V_{i},v)\{m_{1}(X_{i})-\hat{m}_{1}(X_{i})\}+\frac{1}{N}\sum_{i=1}^{N}T_{i}\hat{w}_{i}\tilde{K}_{h}(V_{i},v)\epsilon_{i}\\
 & +\left\{ \frac{1}{N}\sum_{i=1}^{N}\tilde{K}_{h}(V_{i},v)m_{1}(X_{i})-\mu_{1}(v)\right\} +\mu_{1}(v).
\end{align*}

Now, our goal is to control the difference between $N^{-1}\sum_{i=1}^{N}T_{i}\hat{w}_{i}\tilde{K}_{h}(V_{i},v)\{m_{1}(X_{i})-\hat{m}_{1}(X_{i})\}$
and $N^{-1}\sum_{i=1}^{N}\tilde{K}_{h}(V_{i},v)\{m_1(X_i)-\hat{m}_{1}(X_{i})\}$. The weight estimators in Section \ref{sec:estimation} can be adopted similarly to control this difference. It can be shown that the term $S_{N,h}(\hat w, m_1-\hat m_1):=\|\sum_{i=1}^{N}(T_{i}\hat{w}_{i}-1)\tilde{K}_{h}(V_{i},\cdot)\{m_{1}(X_{i})-\hat{m}_{1}(X_{i})/N\}\|_2^2$ can achieve a faster rate of convergence than $S_{N,h}(\hat w, m_1)$ does with the same estimated weights $\hat w$ as long as $\hat m_1$ is a consistent estimator.
However, this property does not improve 
the final convergence rate of the PCATE estimation.
This is because the term
$\| {N}^{-1}\sum_{i=1}^{N}\tilde{K}_{h}(V_{i},\cdot)m_{1}(X_{i})-\mu_{1}(\cdot)\|_2^2$ dominates other terms, and thus
the final rate can never be faster than the optimal non-parametric rate. See Remark \ref{rmk: aug} for more details.
Our theoretical results  reveal that
the benefit of using the augmentations lies in the relaxed order requirement of the tuning parameters
to achieve the optimal convergence rate. Therefore,
the performance of the augmented estimator is expected to be more robust to the tuning parameter selection.

Unlike other AIPW-type estimators \citep{lee2017doubly,fan2020estimation,zimmert2019nonparametric,semenova2020debiased}
which often rely on data splitting for estimating the propensity score
and outcome mean functions to relax technical conditions, 
our estimator does not
require data splitting to facilitate the convergence with augmentation. See also Remark \ref{rmk:splitting}. We  defer the theoretical comparison between our estimator and the existing AIPW-type estimators in Remark \ref{rmk:aipw2} in Section \ref{sec:theory}.

Last, we note that there are existing work using weights to balance the residuals \citep[e.g.][]{athey2016approximate,wong2018kernel}, which is similar to what we consider here for the proposed augmented estimator.  These estimators are designed for ATE estimation and the balancing weights cannot be directly adopted here with theoretical guarantee.

\section{Asymptotic properties}
\label{sec:theory}

In this section, we conduct an asymptotic analysis for the proposed estimator. For simplicity,
we assume $\mathcal{X}=[0,1]^d$. To facilitate our theoretical discussion in terms of smoothness,
we assume the RKHS $\mathcal{H}$ is contained in a Sobolev space (see Assumption \ref{assum: ratio}).
Our results can be extended to other choices of $\Hscr$ if the corresponding entropy result and boundedness condition for the unit ball $\{u\in \Hscr: \|u\|_\Hscr\leq 1\}$ are provided.
Recall that we focus on $\E\{ Y(1)\mid V = v\}$. Similar analysis can be applied to  $\E\{ Y(0)\mid V =v \}$ and finally the PCATE.

\subsection{Regularity conditions}
\label{sec:assum}

Let $\ell$ be a positive integer.
For any function $u$ defined on $\mathcal{X}$,  the the Sobolev norm is
$\|u\|_{\mathcal{W}^\ell} = \sqrt{\sum_{|\beta|\le \ell} \|D^\beta u\|_{2}^2}$,
where $D^\beta u(x_1,\dots,x_d) = \frac{\partial^{|\beta|}u}{\partial x_1^{\beta_1} \dots \partial x_d^{\beta_d}}$ for a multi-index $\beta=(\beta_1,\dots,\beta_d)$.
The Sobolev space $\mathcal{W}^\ell$ consists of functions with finite Sobolev norm.
For $\epsilon>0$,  we denote by $\mathcal{N}(\epsilon, \Fscr, \|\cdot\|)$ the $\epsilon$-covering number of a set $\Fscr$ with respect to some norm $\|\cdot\|$. Next, we list the assumptions that are useful for our
asymptotic results.

\begin{assumption}
	\label{assum: ratio}
    The unit ball of $\mathcal{H}$ is a subset of a ball in the Sobolev space $\mathcal{W}^\ell$,
    with the ratio $\alpha := d/\ell$ less than 2.
\end{assumption}

\begin{assumption}
	\label{assum: m}
	The regression function $m(x) \in \Hscr$.
\end{assumption}

\begin{assumption}
	\textbf{(a)} K is symmetric, $\int K(s) ds = 1$, and there exists a constant $\newu\ltxlabel{supkernel}$ such that $K(s) \leq \oldu{supkernel}$ for all $s$. Moreover, 
	$\int s^2 K(s) ds < \infty$ and $\int K^2(s)ds<\infty$.
	\textbf{(b)} Take $\Kscr = \{ K\{(v-\cdot)/h\} : h>0, v\in[0,1]^{d_1}\}$. There exist constants $\Ak>0$ and $\nuk >0$ such that $\mathcal{N}(\varepsilon, \Kscr, \|\cdot\|_\infty) \leq \Ak \varepsilon^{-\nuk}$.
	\label{assum: kernel_2}
\end{assumption}

\begin{assumption}
	\label{assum: density}
	The density function $g(\cdot)$ of the random variable $V\in[0,1]^{d_1}$ is continuous, differentiable, and  bounded away from zero, \emph{i.e.},
	there exist constants $\newu\ltxlabel{density_lower}>0$ and $\newu\ltxlabel{density_upper}>0$ such that
	$\oldu{density_lower}\leq g(v)  \leq \oldu{density_upper}$.
\end{assumption}

\begin{assumption}
	\label{assum: ban}
	$h \rightarrow 0$ and  $N^{\frac{2}{2+\alpha}}h^{d_1} \rightarrow \infty$, as $N\rightarrow \infty$. 
\end{assumption}

\begin{assumption}
	\label{assum: joint}
The joint density of $\{m(X), V\}$ and the conditional expectation $\E\{m(X) \mid V = v\}$ are continuous.
\end{assumption}

\begin{assumption}
	\label{assum:error}
	The errors $\{\varepsilon_i, i= 1,\dots, N\}$ are uncorrelated, with $\E(\varepsilon_i) = 0$ and $\Var(\varepsilon_i) \leq \sigma_0^2$ for all $i = 1, \dots, N$.  Furthermore,  $\{\varepsilon_i, i= 1,\dots, N\}$ are independent of $\{T_i, i = 1,\dots, N\}$ and $\{X_i, i =1, \dots, N\}$.
\end{assumption}

Assumption \ref{assum: ratio} is a common condition in the literature of smoothing spline regression. Assumptions \ref{assum: kernel_2}--\ref{assum: joint} comprise standard conditions for kernel smoother \citep[e.g.,][]{mack1982weak,einmahl2005uniform,wasserman2006all} except that we require $N^{\frac{\alpha}{2+\alpha}}h^{d_1} \rightarrow \infty$ instead of $Nh^{d_1} \rightarrow \infty$ to
ensure the difference between $\|u\|_N$ and $\|u\|_2$ is asymptotically negligible.
Assumption \ref{assum: kernel_2}(b) is satisfied whenever
$K(\cdot) = \psi\{p(\cdot)\}$ with $p(\cdot)$ being a polynomial in $d_1$ variables and $\psi$ being a real-valued function of bounded variation \citep*{van2000asymptotic}. 

\subsection{$L_2$-norm balancing}\label{sec: L2theory}

Given two sequences of positive real numbers $(A_1,A_2,\dots)$ and $(B_1,B_2,\dots)$,
$A_N = \bigO(B_N)$ represents that there exists a positive constant $M$ such that 
$A_N \leq M B_N$ as $N \rightarrow \infty$; $A_N = \smallO(B_N)$ represents that $A_N /B_N \rightarrow 0$ as $N\rightarrow \infty$, and $A_N \asymp B_N$ represents $A_N = \bigO(B_N)$ and $B_N = \bigO(A_N)$. 

\begin{thm}
	\label{thm: prep}
	Suppose Assumptions \ref{assump:TAignorability}--\ref{assum: ban} hold.
	If $\lambda_1^{-1} = \smallO(Nh^{d_1})$, we have $S_{N,h}(\hat w, m) = \bigOp (\lambda_1\|m\|_N^2 + \lambda_1\|m\|_\Hscr^2 + \lambda_2 h^{-d_1}\|m\|_N^2) $. If we further assume $\lambda_2^{-1} = \bigO(\lambda_1^{-1} h^{-d_1}) $,  then $R_{N,h}(\hat w) = \bigOp(h^{-d_1})$.
\end{thm}

Theorem \ref{thm: prep} specifies the control of the balancing error and the weight variability. They can be used to derive the convergence rate of the proposed estimator in the following theorem.

\begin{thm}
	\label{thm: decomp}
	Suppose Assumptions \ref{assump:TAignorability}-\ref{assum:error} hold. If  $\lambda_1^{-1} = \smallO(Nh^{d_1})$, $\lambda_2^{-1} = \bigO(\lambda_1^{-1} h^{-d_1})$, and $h^2 = \smallO((N^{-1}h^{-d_1})^{1/2})$,
	\begin{multline*}
		\left\|\frac{1}{N}\sum_{i=1}^{N} T_i\hat w_iY_i K_h\left(  V_i, \cdot\right) -
		\E\left\lbrace Y(1) | V= \cdot\right\rbrace
		\right\|_2 \\ = \bigOp\{N^{-1/2}h^{-d_1/2}+\lambda_1^{1/2}\|m_1\|_\Hscr + \lambda_2^{1/2} h^{-d_1/2}\|m_1\|_2\}.
	\end{multline*}
\end{thm}
The proof can be found in \supp{\ref{sec:proof_prep} and \ref{sec:proof_decomp}}. 
Since we require $\lambda_1^{-1} = \smallO(Nh^{d_1})$, the best convergence rate that we can achieve in Theorem \ref{thm: decomp} is arbitrarily close to the optimal rate $N^{-1/2}h^{-d_1/2}$.
It is unclear if this arbitrarily small gap is an artifact of our proof structure.
However, in Theorem \ref{thm: aug} below, we show that this gap can be closed by using the proposed augmented estimator.

\begin{remark}
	\label{rmk:ipw}
	\cite{abrevaya2015estimating} adopt an inverse probability weighting (IPW) method to estimate the PCATE, where the propensity scores are approximated parametrically or by kernel smoothing. 
	They provide point-wise convergence result for their estimators, as opposed to $L_2$ convergence in our theorem.
	For their nonparametric propensity score estimator, their result is derived based on a strong smoothness assumption of the propensity score. 
	More specifically, it requires high-order kernels (the order should not be less than $d$) in estimating both the propensity score and the later PCATE  in order to achieve the optimal convergence rate.
	Compared to their results, our proposed estimator does not involve such a strong smoothness assumption nor a parametric specification of the propensity score.
\end{remark}

\subsection{$L_\infty$-norm balancing}
In Section \ref{sec:estimation},
we mention several choices of the metric in the balancing error \eqref{eqn:S_form}.
In this subsection, we provide a theoretical investigation of an important
 case with $L_\infty$-norm.
We note that efficient computation of the corresponding weights
is challenging, and thus is not pursued in the current paper. 
Nonetheless, it is theoretically interesting to derive the convergence result for the proposed estimator with $L_\infty$-norm.  
More specifically,
the estimator of interest in this subsection is defined by replacing the $L_2$-norm in $S_{N,h}(w, u)$ and $R_{N,h}(w)$ with the $L_\infty$-norm.
Instead of $L_2$ convergence rate (Theorem \ref{thm: decomp}),
we can obtain the uniform convergence rate of this estimator in the following theorem.
\begin{thm}
\label{thm:sup}
  Suppose Assumptions \ref{assump:TAignorability}--\ref{assum:error} hold, 
Let $\tilde{w}$ be the solution to \eqref{eqn:obj_all}
but with $S_{N,h}(w,u) = \|\mathcal{M}_{N,h,w}(u, \cdot)\|_\infty$ and $R_{N,h}(w) = \|\frac{1}{N} \sum_{i=1}^N T_iw_i\tilde{K}_h(V_i,\cdot)\|_\infty$.
If $\lambda_1^{-1} \asymp Nh^{d_1}\log(1/h)$, $\lambda_2 \asymp N^{-1}$, $\log(1/h) /(\log\log N) \rightarrow \infty$ as $N \rightarrow \infty$, and
	$h^2 = \smallO\{(N^{-1}h^{-d_1}\log(1/h))^{1/2}\}$,
$$\left\|\frac{1}{N}\sum_{i=1}^{N} T_i\hat w_iY_i K_h\left(  V_i,\cdot\right) -\E\left\lbrace Y(1) | V= \cdot\right\rbrace  \right\|_\infty  = \bigOp\{N^{-1/2}h^{-d_1/2}\log^{1/2}(1/h)\}.$$
\end{thm}
We provide the proof outline in \supp{\ref{sec:proof_sup}}.

Different from Theorem \ref{thm: decomp}, the  uniform convergence rate is optimal. Roughly speaking, this is because, compared to the optimal $L_2$ convergence rate, the optimal uniform convergence rate has an extra logarithmic order, which dominates the arbitrarily small gap mentioned in Section \ref{sec: L2theory}.

\subsection{Augmented estimator}

We also derive the asymptotic property of the augmented estimator.
\begin{thm}
	\label{thm: aug}
	Suppose Assumptions \ref{assump:TAignorability}--\ref{assum:error} hold. Take $e = m_1 - \hat m_1 \in \Hscr$ such that $\|e\|_\Hscr = \smallOp(1)$ and $\|e\|_2 = \smallOp(1)$.  Suppose  $\lambda_1^{-1} = \smallO(Nh^{d_1})$, 
	$\lambda_2^{-1} = \bigO(\lambda_1^{-1}h^{-d_1})$, and $h^2 = \smallO((N^{-1}h^{-d_1})^{1/2})$, we have
	\begin{align*}
		 & \left\|\frac{1}{N} \sum_{i=1}^N  \tilde K_h (V_i, \cdot) \hat m_1(X_i) + \frac{1}{N} \sum_{i=1}^N T_i \hat w_i \tilde K_h (V_i, \cdot) \{ Y_i - \hat m_1(X_i)\} -\E\left\lbrace Y(1) | V= \cdot\right\rbrace  \right\|_2 \\
		 & =  \bigOp(N^{-1/2}h^{-d_1/2} +  \lambda_1^{1/2} \|e\|_{\Hscr} + \lambda_2^{1/2} h^{-d_1/2} \|e\|_2)
	\end{align*}
\end{thm}

\begin{remark}
	\label{rmk: aug}
	In Theorem \ref{thm: decomp}, to obtain the best convergence rate that is arbitrarily close to $N^{-1/2}h^{-d_1/2}$, we require $\lambda_1$ and $\lambda_2$ to be arbitrarily close to $N^{-1}h^{-d_1}$ and 
	$N^{-1}$ respectively.
	While in Theorem \ref{thm: aug}, as long as $\lambda_1 = \bigO(N^{-1}h^{-d_1}\log(1/h)\|e\|^{-2}_\Hscr)$ and $\lambda_2 = \bigO(N^{-1}\log(1/h)\|e\|^{-2}_N)$,  the optimal convergence rate $N^{-1/2}h^{-d_1/2}$ is achievable.  Therefore,  with the help of augmentation, we can relax the order requirement of the tuning parameters for achieving the optimal rate. As a result,  
	it is ``easier" to tune $\lambda_1$ and $\lambda_2$ with augmentation.
\end{remark}

\begin{remark}
	\label{rmk:aipw1}
	Several existing works focus on estimating the FCATE $\gamma(\cdot)$ given the full set of covariates \citep{kennedy2020optimal, nie2017quasi}.
	While one could partially marginalize their estimate $\hat{\gamma}(\cdot)$ of $\gamma(\cdot)$ to obtain an estimate $\check{\tau}(\cdot)$ of $\tau(\cdot)$,
	it is not entirely clear whether the convergence rate of $\check{\tau}(\cdot)$ is optimal, even when $\hat\gamma(\cdot)$ is rate-optimal non-parametrically.
	The main reason is that the estimation error $\hat{\gamma}(x)-\gamma(x)$ are  dependent across different values of $x$.
	Note that $\gamma(\cdot)$ is a $d$-dimensional function and the optimal rate is  slower than the optimal rate that we achieve for $\tau(\cdot)$, a $d_1$-dimensional function, when $d_1 < d$.
	So the partially marginalizing step needs to be shown to speed up the convergence significantly, in order to be comparable to our rate result.
\end{remark}

\begin{remark}
	\label{rmk:aipw2}
	To directly estimate the PCATE $\tau(\cdot)$, a common approach is
	to apply smoothing methods to the adjusted responses with respect to $V$ instead of $X$.
	Including ours, most papers follow this approach. The essential difficulty discussed in Remark \ref{rmk:aipw1} remains and hence the analyses are more challenging than those for the FCATE $\gamma(\cdot)$,
	if the optimal rate is sought.
	In the existing work \citep{lee2017doubly, semenova2017estimation,zimmert2019nonparametric, fan2020estimation} that adopts augmentation,
	estimations of both propensity score and outcome mean functions, referred to as nuisance parameters in below, are required.
	\cite{lee2017doubly} adopt parametric modeling for both nuisance parameters and achieve
	double robustness; \emph{i.e.}, only one nuisance parameter is required to be consistent to achieve the optimal rate for $\tau(\cdot)$.
	However, parametric modeling is a strong assumption and may be restrictive.
	\citet{semenova2017estimation,zimmert2019nonparametric, fan2020estimation}
	adopt nonparametric nusiance modeling. 
	Importantly, to achieve optimal rate of $\tau(\cdot)$,
	these works require consistency of \textit{both} nuisance parameter estimations.
	In other words, the correct specification of both nuisance parameter models are required.
	\cite{fan2020estimation}
	require  both nuisance parameters to be estimated consistently with respect to $L_\infty$ norm.
	While \citet{semenova2017estimation} and \citet{zimmert2019nonparametric} 
	implicitly require the product convergence rates from the two estimators to be faster than $N^{-1/2}$ 
	to achieve the optimal rate of the PCATE estimation.
	In other words, if one nuisance estimator is not consistent, the other nuisance estimator has to converge faster than $N^{-1/2}$.
	Unlike these existing estimators, 
	our estimators does not rely on restrictive parametric modeling
	nor consistency of both nuisance parameter estimation.
\end{remark}

\begin{remark}
\label{rmk:splitting}
    Moreover, most existing work (discussed in Remark \ref{rmk:aipw2}) require data-splitting or cross-fitting
    to remove the dependence between nuisance parameter estimations and the smoothing step for estimating $\tau(\cdot)$, which is crucial in their theoretical analyses.
    \cite{zheng2011cross} 
     first propose cross-fitting in the context of Target Maximum Likelihood Estimator and \cite{chernozhukov2017double} 
      subsequently apply to estimating equations.  
This technique can be used to relax the Donsker conditions required for the class of nuisance functions.  
   \cite{kennedy2020optimal} applies cross-fitting to FCATE estimation for similar purposes.
    While data-splitting and cross-fitting are beneficial in theoretical development, they are not generally a
    favorable modification, due to criticism of increased computation and fewer data for the estimation of different components (nuisance parameter estimation and smoothing).
    However, our estimators do not require data-splitting in both theory and practice.
	Our asymptotic analyses are non-standard and significantly different than these existing work since, without data-splitting, the estimated weights are intimately related with each others and an additional layer of smoothing further complicates the dependence structure.
\end{remark}

\section{Simulation\label{sec:sim} }

We evaluate the finite-sample properties of various estimators with
sample size $N=100$. 
The covariate $X\in\mathbb{R}^{4}$ is generated
by 
$X_{1}=Z_{1}$, $X_{2}=Z^2_{1}+Z_{2}$, $X_{3}=\exp(Z_{3}/2) + Z_2$ and
$X_{4}= \sin(2Z_1) + Z_4$ 
with $Z_{j}\sim$ Uniform$[-2,2]$ for
$j=1,\ldots,4$. The conditioning variable of interest is set to be
$V=X_{1}$. The treatment is generated by $T\mid X\sim\text{Bernoulli}\{\pi(X)\}$,
and the outcome is generated by $Y\mid(T=t,X)\sim\text{ N}\{m_{t}(X),1\}$.
To assess the estimators, we consider two different
choices for each of $\pi(X)$ and $m_{t}(X)$, summarized in Table~\ref{tab:simulation-setup}.  In Settings 1 and 2, the outcome mean functions $m_t$ are relatively easy to estimate, as they are linear with respect to covariates $X$. While in Settings 3 and 4, the outcome mean functions are nonlinear and more complex. Propensity score function $\pi(X)$ is set to be linear with respect to  $X$ in Settings 1 and 3, and nonlinear in Settings 2 and 4.
The corresponding PCATEs are nonlinear and shown in  Figure \ref{fig:PATEs}. 

\begin{table}[t]
\caption{Models for simulation with two specifications for each of $\logit\{\pi(X)\}$
and $m_{t}(X)$ ($t=0,1$)}
  \centering
  \resizebox{\textwidth}{!}{  
\begin{tabular}{cccc}
\hline 
Setting & $\pi(X)$ & $m_{t}(X)$ $(t=0,1)$ & $\tau(v)$\tabularnewline
\hline 
1 &  $1/(1+ \exp{X_1 + X_3})$ &  $10+ X_1 + (2t-1)(X_{2}+X_{4})$ & $2v^2 + 2\sin(2v) $\tabularnewline
2 &  $1/(1+ \exp{Z_1 +Z_2+ Z_3})$ &   $10+ X_1 + (2t-1)(X_{2}+X_{4})$ & $2v^2 + 2\sin(2v) $\tabularnewline
3 &  $1/(1+ \exp{X_1 + X_3})$ &  $10 + (2t-1)(Z_1^2 + 2Z_1\sin(2Z_1)) + Z_2^2 + \sin(2Z_3)Z_4^2$ & $ 2v^2 + 4v\sin(2v)$\tabularnewline
4 &  $1/(1+ \exp{Z_1 +Z_2+ Z_3})$ & $10 + (2t-1)(Z_1^2 + 2Z_1\sin(2Z_1)) + Z_2^2 + \sin(2Z_3)Z_4^2$ & $ 2v^2 + 4v\sin(2v)$\tabularnewline
\hline 
\end{tabular}
}
\label{tab:simulation-setup}
\end{table}

\begin{figure}
    \centering
    \includegraphics[width=0.8\textwidth]{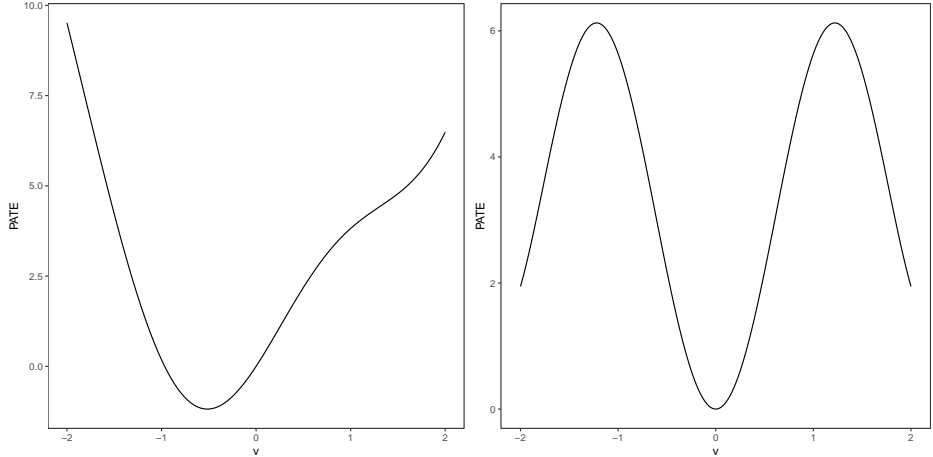}
    \caption{The target PCATEs in the simulation study: the left panel plots the PCATE in Settings 1 and 2; the right panel plots the PCATE in Settings 3 and 4.}
    \label{fig:PATEs}
\end{figure}

In our study, we compare the following estimators for $\tau(\cdot)$:
\begin{enumerate}
\item \cate: the proposed estimator using the tensor product of second-order
Sobolev kernel as the reproducing kernel $\kappa$. 
\item \ate: the weighted estimator described in Remark \ref{rmk:ate_rkhs}, whose weights are estimated based on the covariate balancing criterion in \cite{wong2018kernel}.
\item \ipw: the inverse propensity weighting estimator from \citet{abrevaya2015estimating}
with a logistic regression model for the propensity score. In Settings
1 and 3, the propensity score model is correctly specified. 
\item Augmented estimators by augmenting the estimators in a)--c)
by the outcome models. We consider two outcome models: linear regression
(LM) and kernel ridge regression (KRR). 
\item \Noweights: the estimator that uses outcome regressions. 
It directly smooths $\{(X_i, \hat m_1(X_i) - \hat m_0(X_i)): i=1,\dots,N\}$ to estimate the PCATE, where $\hat m_1(X_i)$ and  $\hat m_0(X_i)$ are estimated with outcome models considered in d).
\end{enumerate}
For all estimators, a kernel smoother with Gaussian kernel is applied to the adjusted
responses. For \ipw, the bandwidth is set
as $\tilde{h}=\hat{h}\times N^{1/5}\times N^{-2/7}$, where $\hat{h}$
is a commonly used optimal bandwidth in the literature such as the
direct plug-in method \citep*{ruppert1995effective,wand1994kernel,calonico2019nprobust}.
Throughout our analysis, $\hat{h}$ is computed via the R package ``nprobust".
The same bandwidth formula $\tilde{h}$ is also considered by \citet{lee2017doubly}
and \citet{fan2020estimation} to estimate the CATE. For the proposed
estimator, a bandwidth should be given prior to estimate
the weights. We first compute the adjusted response by using weights
from \citet{wong2018kernel}, and then obtain the bandwidth $\tilde{h}$ as the
input to our proposed estimator.

\begin{table}[ht]
\centering
\resizebox{\textwidth}{!}{%
\begin{tabular}{cc|cc|cc|cc|cc}
\hline 
 &  & \multicolumn{2}{c}{Setting 1} & \multicolumn{2}{c}{Setting 2 } & \multicolumn{2}{c}{Setting 3 } & \multicolumn{2}{c}{Setting 4 }\tabularnewline
Augmentation  & Method  & AISE  & MeISE  & AISE  & MeISE  & AISE  & MeISE  & AISE  & MeISE \tabularnewline
\hline 
  No & \ipw & 80.212 (16.19) & 25.706 & 40.898 (6.75) & 18.838 & 105.509 (31.66) & 31.146 & 49.04 (9.34) & 20.989 \\ 
    & \ate & 16.136 (0.77) & 9.633 & 9.653 (0.46) & 6.139 & 18.458 (0.93) & 10.264 & 11.367 (0.59) & 7.257 \\ 
    & \cate & 4.223 (0.22) & 2.725 & 2.232 (0.06) & 1.997 & 4.769 (0.26) & 3.229 & 3.214 (0.08) & 3.006 \\ 
    \hline
  LM & \ipw & 1.167 (0.04) & 0.958 & 1.066 (0.03) & 0.893 & 5.431 (1.37) & 2.405 & 3.74 (0.78) & 2.001 \\ 
    & \ate & 1.156 (0.03) & 1.011 & 1.112 (0.03) & 1.003 & 3.471 (0.19) & 2.237 & 2.276 (0.06) & 1.924 \\ 
    & \cate & 1.095 (0.03) & 0.947 & 0.977 (0.02) & 0.868 & 2.966 (0.17) & 2.014 & 1.856 (0.05) & 1.596 \\ 
    & \Noweights & 0.843 (0.03) & 0.716 & 0.767 (0.02) & 0.67 & 5.431 (0.18) & 4.368 & 4.254 (0.05) & 4.107 \\ 
    \hline
  KRR & \ipw & 1.25 (0.04) & 1.048 & 1.039 (0.03) & 0.905 & 3.203 (0.19) & 2.096 & 2.313 (0.14) & 1.645 \\ 
    & \ate & 1.289 (0.04) & 1.092 & 1.152 (0.03) & 0.993 & 3.07 (0.13) & 2.213 & 2.125 (0.06) & 1.827 \\ 
    & \cate & 1.203 (0.04) & 1.023 & 1.012 (0.03) & 0.856 & 2.658 (0.12) & 1.911 & 1.843 (0.05) & 1.53 \\ 
    & \Noweights & 1.137 (0.03) & 0.953 & 0.905 (0.02) & 0.797 & 3.778 (0.12) & 3.213 & 2.796 (0.06) & 2.634 \\ 
\hline 
\end{tabular}} \caption{Simulation results for the four settings, where the average integrated
squared errors (AISE) with standard errors (SE) in parentheses and
median integrated squared error (MeISE) are provided.}
\label{table:simres} 
\end{table}

Table \ref{table:simres} shows the average integrated squared error
(AISE) and median integrated squared error (MeISE) of above estimators
over 500 simulated datasets. Without
augmentation, \cate{} has
 significantly smaller AISE and MeISE than other methods among all four settings.
All methods are improved
by augmentations. In Settings 1 and 2, \Noweights{} has the best performance. In these two settings, the outcome models are linear and thus can be estimated well by both LM and KRR.
However,  the differences between \Noweights{} and \cate{} are relatively small. As for Settings 3 and 4 where outcome mean functions are more complex, \cate{} achieves the best performance and shows a significant improvement over \Noweights{}, especially when outcome models are misspecified (See Settings 3 and 4 with LM augmentation). 
As \ate{} is only designed for marginal covariate
balancing, its performance is worse than \cate{} across all scenarios.

\section{Application}
\label{sec:real}

We apply the estimators in Section \ref{sec:sim} to estimate the
effect of maternal smoking on birth weight as a function of mother's
age, by re-analyzing a dataset of 
mothers
in Pennsylvania in the USA (\url{http://www.stata-press.com/data/r13/cattaneo2.dta}).
Following \citet{lee2017doubly}, we focus on white
and non-Hispanic mothers, resulting in the sample size $N=3754$.
The outcome $Y$ is the infant birth weight measured in grams and
the treatment indicator $T$ is whether the mother is a smoker.
For the treatment ignorability, we include the following
covariates: mother’s age, an indicator variable for alcohol consumption
during pregnancy, an indicator for the first baby, mother’s educational
attainment, an indicator for the first prenatal visit in the first
trimester, the number of prenatal care visits, and an indicator for
whether there was a previous birth where the newborn died.
Due to the boundary effect of the kernel smoother,
we focus on $\tau(v)$ for $v\in [18, 36]$,
which ranges from $0.05$ quantile to $0.95$ quantile of mothers' ages in
the sample.

We compute various estimators of the PCATE in Section \ref{sec:sim}. For all the following \ipw{}  related estimators, logistic regression is adopted to estimate propensity scores. 
Following
\citet{abrevaya2015estimating}, we include IPW: the \ipw{} estimator  with no augmentation.   Following
\citet{lee2017doubly}, we include IPW(LM): 
the \ipw{} estimator with LM augmentation. 
We include Proposed: the proposed estimators
 with KRR augmentation here as it performs the best in the simulation study and aligns with our assumption for the outcome mean functions.  
For completeness, we
also include IPW(KRR): 
the \ipw{} estimator with KRR augmentation; \Noweights{}(KRR): the \Noweights{} estimator where the outcome mean functions are estimated by KRR; \Noweights{}(LM): the \Noweights{} estimator where the outcome mean functions are estimated by LM.
For both the KRR augmentation and the weights estimations in Proposed,
we consider a tensor product RKHS, with the
second order Sobolev space kernel for continuous covariates and the
identity kernel for binary covariates.

\begin{figure}[ht]
    \centering
    \includegraphics[width=\textwidth]{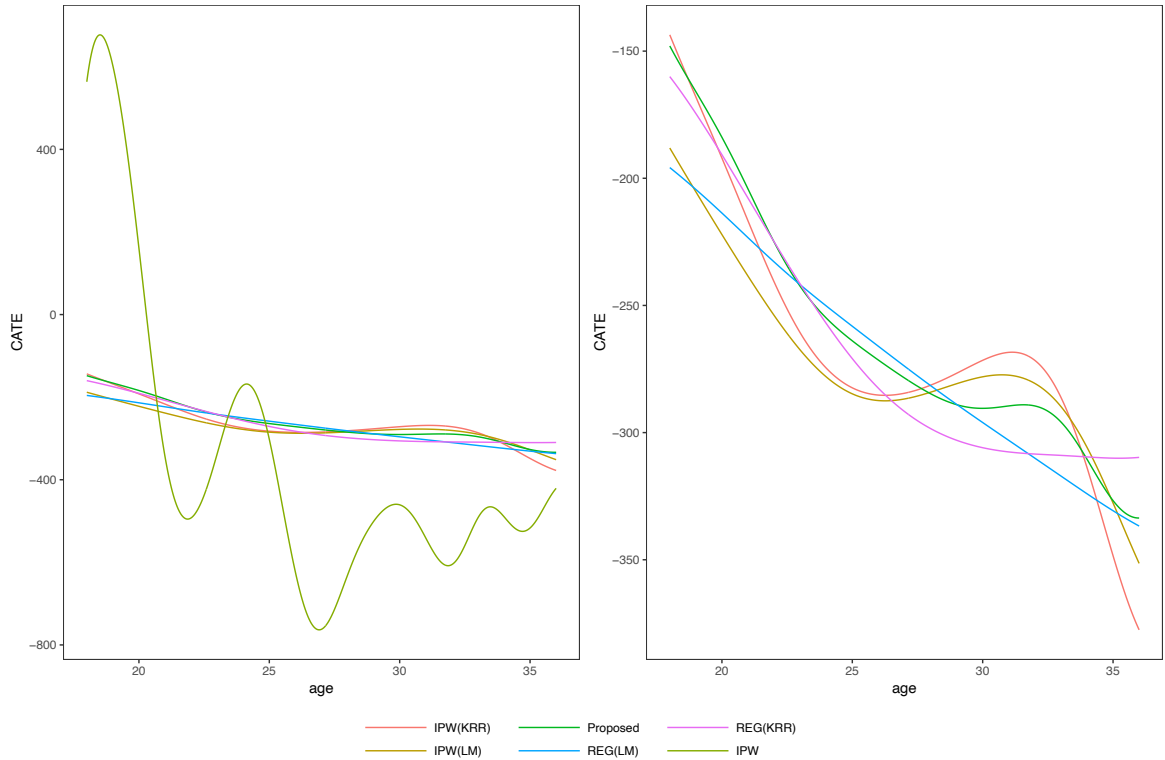}
    \caption{The estimated PCATEs of  maternal smoking on birth weight as a function of mother's
age: the left panel includes all estimators and the right panel excludes
the IPW estimator. }
\label{fig:cate_all}
\end{figure}

Figure \ref{fig:cate_all} shows the estimated PCATEs from different
methods. From the left panel in Figure \ref{fig:cate_all}, IPW has
large variations compared to other estimators. The significantly positive
estimates before age 20 conflict with the results from various
established research works indicating that smoking
has adverse effect on
birth weights \citep{kramer1987intrauterine,almond2005costs,abrevaya2006estimating,abrevaya2008effects}.
From the right panel in Figure \ref{fig:cate_all}, the remaining
four estimators show a similar pattern that the effect becomes more
severe as mother's age increases, which aligns with the existing
literature \citep{fox1994birth,walker2007teen}. The \Noweights{}(LM) estimator shows a linearly decreasing pattern, while the \Noweights{}(KRR) estimator stops decreasing after age 30. For three weighted estimators, the effects are stable around age 27 to 32, but tend to decrease quickly after age 32.   Compared to IPW(LM) and IPW(KRR), Proposed does not show the increasing tendency before age 30 and the decrease after age 32 is relatively smoother.

\section{Discussions}
The PCATE characterizes subgroup treatment effects and provides insights about how treatment effect varies across the characteristics of interest.
We develop a novel nonparametric estimator for the PCATE
under treatment ignorability. 
The proposed hybrid kernel weighting is a non-trivial extension of covariate balancing weighting in the ATE estimation literature in that it aims to achieve approximate covariate balancing for all flexible outcome mean functions and for all subgroups defined based on continuous variables. In contrast to existing estimators, we do not require any smoothness assumption on the propensity score, and thus our weighting approach is particularly useful in studies when the treatment assignment mechanism is quite complex. 

We conclude with 
several interesting and important extensions of the current estimator 
as future research directions. 
First, an improved data-adaptive  bandwidth selection procedure is worth investigating as it plays an important role in smoothing. In addition, instead of local constant regression,  other alternatives such as linear or spline smoothers  can be considered. 
Third, given the appealing theoretical properties, we will investigate efficient computation of the proposed weighting estimators with $L_\infty$-norm. 
Furthermore, the asymptotic distribution of proposed estimator is worth studying so that inference procedures can be developed. 

\section*{Acknowledgement}
The work of Raymond K. W. Wong is partially supported by the National Science Foundation (DMS-1711952 and CCF-1934904). The work of Shu Yang is partially supported by the National Institute on Aging (1R01AG066883) and the National Science Foundation (DMS-1811245). The work of Kwun Chuen Gary Chan is partially supported by the National Heart, Lung, and Blood Institute (R01HL122212) and the National Science Foundation (DMS-1711952).

\appendix

\bibliographystyle{chicago}
\bibliography{refer2.bib}

\makeatletter\@input{xx1.tex}\makeatother
\end{document}


\def\spacingset#1{\renewcommand{\baselinestretch}%
{#1}\small\normalsize}

\linsps

  \title{\bf Supplementary material for ``Estimation of Partially Conditional Average Treatment Effect by Hybrid Kernel-covariate Balancing''}
	
	\author{
		Jiayi Wang\thanks{
		Jiayi Wang is Ph.D. candidate, Department of Statistics, Texas A\&M University, College Station, TX 77843, USA (Email: {\tt jiayiwang@stat.tamu.edu}).}\,,
		Raymond K. W. Wong\thanks{
		Raymond K. W. Wong is Associate Professor, Department of Statistics, Texas A\&M University,
			College Station, TX 77843, USA (Email: {\tt raywong@tamu.edu}).
			His research is partially supported by the National Science Foundation (DMS-1711952 and CCF-1934904).
		}\,,
		Shu Yang\thanks{
		Shu Yang is Assistant Professor, Department of Statistics, North Carolina State University, Raleigh, NC 27695, USA (Email: {\tt syang24@ncsu.edu}).
		Her research is partially supported by the National Institute on Aging (1R01AG066883)
		and the National Science Foundation (DMS-1811245).
		}\,
		and
		Kwun Chuen Gary Chan\thanks{
		Kwun Chuen Gary Chan is Professor, Department of Biostatistics, University of Washington, Seattle, WA 98195, USA (Email: {\tt kcgchan@u.washington.edu}).
		His research is partially supported by
		the National Heart, Lung, and Blood Institute (R01HL122212)
		and the National Science Foundation (DMS-1711952).
		}}
		\date{}
  \maketitle

\spacingset{1}
\section{Computation}
\label{sec:comp_supp}
\subsection{Reparametrization}
\label{sec:reparameterization}
To solve \eqref{eqn:obj_final},  we focus on the inner optimization of \eqref{eqn:obj_all}:
$\sup_{u\in\Hscr_{N}}\{S_{N,h}(w,u)-\lambda_{1}\|u\|_{\Hscr}^{2}\}$, which is equivalent to
\begin{align}
\label{eqn:inner_eq1}
    \sup_{u\in\Hscr}\left\{\frac{S_{N,h}(w,u)}{\|u\|_N}-\lambda_{1}\frac{\|u\|_{\Hscr}^{2}}{\|u\|_N}\right\}.
\end{align}

By the representer theorem \citep{wahba1990spline}, the solution
of this infinite dimensional optimization \eqref{eqn:inner_eq1} can be shown to lie in a
finite dimensional subspace of $\Hscr$: $\mathrm{span}\{\kernel(X_{i},\cdot):i=1,\dots,N\}$.
Take $M=[\kernel(X_{i},X_{j})]_{i,j=1}^{N}\in\mathbb{R}^{N\times N}$,
we obtain
\begin{align}
\sup_{u\in\Hscr_{N}}\left\lbrace S_{N,h}(w,u)-\lambda_{1}\|u\|_{\Hscr}^{2}\right\rbrace =\sup_{\alpha\in\mathbb{R}^{N}}\left[\frac{S_{N,h}\{w,\sum_{j=1}^{N}\alpha_{j}\kernel(X_{j},\cdot)\}}{\alpha^{\tp}M^{2}\alpha/N}-\lambda_{1}\frac{\alpha^{\tp}M\alpha}{\alpha^{\tp}M^{2}\alpha/N}\right].\label{eqn:inner}
\end{align}
By the definition of $S_{N,h}(w,u)$ in \eqref{eqn:S_form}, we have
\begin{align*}
S_{N,h}\left\{ w,\sum_{j=1}^{N}\alpha_{j}\kernel(X_{j},\cdot)\right\} =\frac{1}{N^{2}}\alpha^{\tp}M\mathrm{diag}(T\circ w-J)G_{h}\mathrm{diag}(T\circ w-J)M\alpha,
\end{align*}
where $\circ$ represents the element-wise
product of two vectors, $J = [1,1,\dots,1]\in \mathbb{R}^{N}$, and 
\[
G_{h}=\left[\begin{array}{ccc}
\int_{\mathcal{V}}\tilde{K}_{h}(V_{1},v)\tilde{K}_{h}(V_{1},v)\de v & \cdots & \int_{\mathcal{V}}\tilde{K}_{h}(V_{1},v)\tilde{K}_{h}(V_{N},v)\de v\\
\vdots & \ddots & \vdots\\
\int_{\mathcal{V}}\tilde{K}_{h}(V_{N},v)\tilde{K}_{h}(V_{1},v)\de v & \cdots & \int_{\mathcal{V}}\tilde{K}_{h}(V_{N},v)\tilde{K}_{h}(V_{N},v)\de v
\end{array}\right].
\]
Note that $M$ is positive semi-definite.
We consider the eigen-decomposition of $M$ as
\begin{align}
M=PDP^{\tp}\label{eqn:M_eigen}
\end{align}
where $D\in\mathbb{R}^{r\times r}$ is a diagonal matrices
with nonzero diagonal entries, and $P\in\mathbb{R}^{N\times r}$ is
an orthonormal matrix. Take $\beta=N^{-1/2}DP^{\tp}\alpha$. Then
\eqref{eqn:inner} is equivalent to 
\begin{align*}
\sup_{\beta\in\mathbb{R}^{r}:\|\beta\|_{2}\leq1}\beta^{\tp}\left\lbrace \frac{1}{N}P^{\tp}\mathrm{diag}(T\circ w-J)G_{h}\mathrm{diag}(T\circ w-J)P-N\lambda_{1}D^{-1}\right\rbrace \beta.
\end{align*}
Therefore,
\begin{align}
\hat{w}=\argmin_{w\ge1}\left[\sigma_{1}\left\lbrace \frac{1}{N}P^{\tp}\mathrm{diag}(T\circ w-J)G_{h}\mathrm{diag}(T\circ w-J)P-N\lambda_{1}D^{-1}\right\rbrace +\lambda_{2}R_{N,h}(w)\right].
\end{align}

\subsection{Proof of Lemma \ref{lemma:convex}}
\label{sec:proof_convex}
\begin{proof}[Proof of Lemma \ref{lemma:convex}]
By the definition \eqref{eqn:obj_V}, $R_{N,h}(w)$ is a convex function of
$w$. Also, $P^{\tp}(T\circ w-J)$ is an affine transformation of $w$.
Then it suffices to show that $\sigma_{1}\{\mathrm{diag}(y)G_{h}\mathrm{diag}(y)+B\}$
is a convex function of $y$  for any symmetric matrix $B\in\mathbb{R}^{r\times r}$.

First, we show that $G_{h}$ is a positive semi-definite matrix.
For any vector $a\in\mathbb{R}^{N}$, 
\begin{align*}
a^{\tp}G_{h}a & =\int_{\mathcal{V}}a^{\tp}\left[{\begin{array}{ccc}
\tilde{K}_{h}(V_{1},v)\tilde{K}_{h}(V_{1},v) & \cdots & \tilde{K}_{h}(V_{1},v)\tilde{K}_{h}(V_{N},v)\\
\vdots & \ddots & \vdots\\
\tilde{K}_{h}(V_{N},v)\tilde{K}_{h}(V_{1},v) & \cdots & \tilde{K}_{h}(V_{N},v)\tilde{K}_{h}(V_{N},v)
\end{array}}\right]a\, dv\\
 &= \int_{\mathcal{V}} \left(\sum_{j=1}^N \tilde{K}_{h}(V_{j},v) a_j\right)^2 dv \ge 0
\end{align*}
Therefore there exists a matrix $L$ such that $G_{h}=LL^{\tp}$.

Consider any vector $y_{1},y_{2}\in\mathbb{R}^{r}$, and $t\in[0,1]$.
For $\beta\in\mathbb{R}^{r}$, 
\begin{align*}
 & \beta^{\tp}\left[\mathrm{diag}\{ty_{1}+(1-t)y_{2}\}G_{h}\mathrm{diag}\{ty_{1}+(1-t)y_{2}\}+B\right]\beta\\
&= \beta^{\tp}\left[\mathrm{diag}\{ty_{1}+(1-t)y_{2}\}LL^{\tp}\mathrm{diag}\{ty_{1}+(1-t)y_{2}\}+B\right]\beta\\
&= \left\Vert L^{\tp}\mathrm{diag}\{ty_{1}+(1-t)y_{2}\}\beta\right\Vert _{2}^{2}+\beta^{\tp}B\beta\\
&= \left\Vert tL^{\tp}\mathrm{diag}(y_{1})\beta+(1-t)L^{\tp}\mathrm{diag}(y_{2})\beta\right\Vert _{2}^{2}+\beta^{\tp}B\beta\\
&\leq t\|L^{\tp}\mathrm{diag}(y_{1})\beta\|_{2}^{2}+(1-t)\|L^{\tp}\mathrm{diag}(y_{2})\beta\|_{2}^{2}+\beta^{\tp}B\beta\\
&= t\beta^{\tp}\{\mathrm{diag}(y_{1})G_{h}\mathrm{diag}(y_{1})+B\}\beta+(1-t)\beta^{\tp}\{\mathrm{diag}(y_{2})G_{h}\mathrm{diag}(y_{1})+B\}\beta,
\end{align*}
where the above inequality is due to the fact that $\|y\|_{2}^{2}$ is a convex
function of $y$. Therefore, we have 
\begin{multline*}
\sigma_{1}\left(\mathrm{diag}\{ty_{1}+(1-t)y_{2}\}G_{h}\mathrm{diag}\{ty_{1}+(1-t)y_{2}\}+B\right)\\
\leq t\sigma_{1}\left(\mathrm{diag}(y_{1})G_{h}\mathrm{diag}(y_{1})+B\right)+(1-t)\sigma_{1}\left(\mathrm{diag}(y_{2})G_{h}\mathrm{diag}(y_{2})+B\right),
\end{multline*}
which leads to the conclusion.
\end{proof}

\section{Proofs of Theorems}
Through out the proof, we use $\check{x}$ to represent  a  generic  vector  in $\mathcal{X}$, and use $\check{v} \in \mathcal{V}$ to represent the sub-vector of $\check{x}$ that is of interest.
\subsection{Proof of Theorem \ref{thm: prep}}
\label{sec:proof_prep}
Firstly, we introduce some notations. Take  $\gamma_i := T_i w_i^*-1$, where $w_i^* = 1/\pi(X_i)$ for $i = 1,\dots, N$. And define  $\Hscr(1)  := \{ u \in \Hscr: \|u\|_\Hscr \leq 1\}$.
Due to Lemma 2.1 of \cite{lin2000tensor}, there exists a constant $b$ such that $\sup_{u \in \Hscr(1)} |u|_\infty \leq b$.

We replace $\frac{1}{Nh^{d_1}} \sum_{j=1}^N K(\frac{V_j - v}{h})$ in $S_{N,h}(w^*, u)$
with its expectation $g_h(v)$
and obtain
\begin{align}
	\label{eqn:Z_exp}
	\tilde S_{N,h}(w^*,u) & := \left\| \frac{1}{g_h(\cdot)}\left\lbrace \frac{1}{Nh^{d_1}}\sum_{i=1}^{N} \gamma_iu(X_i) K\left(\frac{V_i-\cdot}{h}\right) \right\rbrace\right\|_2^2.
\end{align}
Next, we show that $g_h$ is lower bounded. Based on Assumption \ref{assum: ban}, without loss of  generality, we take $h \leq 1$ .
Under Assumption \ref{assum: density}, there exists a constant $\newl \label {g_lower_const}$ such that
\begin{align}
	\begin{split}
		g_h(v) & =\E \frac{1}{h^{d_1}} K\left(\frac{V_i- v}{h}\right) =  \frac{1}{h^{d_1}} \int_{ I}  K\left(\frac{V- v}{h}\right) g(V) d V = \int_{(zh+v)\in[0,1]^{d_1}} K(z) g(zh+v)dz                                                                \\
		& \ge \oldu{density_lower} \int_{(zh+v)\in[0,1]^{d_1}} K(z) dz \ge  \oldu{density_lower} \int_{(z+v)\in[0,1]^{d_1}} K(z) dz \\
		&\ge  \oldu{density_lower} \min \left\{ \int_{[0,1/2]^{d_1}} K(z) dz , \int_{[-1/2,0]^{d_1}} K(z) dz\right\} \ge \oldl{g_lower_const}. \label{eqn:g_bound}
	\end{split}
\end{align}
Then,
\begin{align}
	\begin{split}
		\tilde S_{N,h}(w^*,u) & \leq  \frac{1}{\inf_{v\in [0,1]^{d_1} }g^2_h(v)} \frac{1}{h^{2d_1}} \left\| \frac{1}{N}\sum_{i=1}^{N} \gamma_iu(X_i) K\left(\frac{V_i-\cdot}{h}\right) \right\|_2^2 \\
		& \leq  \frac{1}{\oldl{g_lower_const}^2 h^{2d_1}} \left\| \frac{1}{N}\sum_{i=1}^{N} \gamma_iu(X_i) K\left(\frac{V_i-\cdot}{h}\right) \right\|_2^2.
	\end{split}
	\label{eqn:tilde_S}
\end{align}

Below, we will establish the bound of $\left| \frac{1}{N}\sum_{i=1}^{N}
	\gamma_iu(X_i) K\left((V_i-v)/{h}\right)\right|$ uniformly for every $u \in \Hscr_N$
by conditioning on $v$.
To start with, we define  $\|f\|_N := \sqrt{\frac{1}{N} \sum_{i=1}^{N} f^2(X_i)}$ for some function $f$,
$\Kscr_h : = \left\lbrace  K\left((\cdot-v)/{h}\right) : v\in[0,1]^{d_1}  \right\rbrace $, $\knnorm := \sup_{\tilde{v} \in [0,1]^{d_1}} \sqrt{\frac{1}{N} \sum_{i=1}^{N} K^2((V_i - \tilde{v})/h)}$.
And we take 
$\Fscr_{h,v} := \left\{  f: f(\check{x}) = u(\check{x})K(\frac{\check{v} - v}{h}); u \in \Hscr(1) \right\}$.

The next lemma provides an entropy bound for the space $\Fscr_{h,v}$.

\begin{lemma}
	\label{entropy}
	For every fixed $h$ and $v$,  there exists a constant $A>0$, 
	such that
	\begin{equation*}
		H(\delta, \Fscr_{h,v}, \|\cdot\|_N)
		\begin{cases}
			= 0                                   & \mbox{if $\delta>2b\knnorm$} \\
			\le A\knnorm^{\alpha}\delta^{-\alpha} & \mbox{otherwise}
		\end{cases}.
	\end{equation*}
\end{lemma}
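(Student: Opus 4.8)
The plan is to realize every element of $\Fscr_{h,v}$ as the image of some $u\in\Hscr(1)$ under multiplication by the fixed function $\check x\mapsto K((\check v-v)/h)$, and to observe that this map is Lipschitz from $(\Hscr(1),|\cdot|_\infty)$ into $(\Fscr_{h,v},\|\cdot\|_N)$ with constant $\knnorm$. Concretely, for $u_1,u_2\in\Hscr(1)$,
\[
\Bigl\| u_1 K(\tfrac{\cdot-v}{h}) - u_2 K(\tfrac{\cdot-v}{h}) \Bigr\|_N^2 \;=\; \frac1N\sum_{i=1}^N (u_1-u_2)^2(X_i)\,K^2\!\Bigl(\tfrac{V_i-v}{h}\Bigr) \;\le\; |u_1-u_2|_\infty^2 \cdot \frac1N\sum_{i=1}^N K^2\!\Bigl(\tfrac{V_i-v}{h}\Bigr) \;\le\; |u_1-u_2|_\infty^2\,\knnorm^2 .
\]
Hence any $\varepsilon$-net of $\Hscr(1)$ in $|\cdot|_\infty$ is carried to an $(\varepsilon\knnorm)$-net of $\Fscr_{h,v}$ in $\|\cdot\|_N$ (the net points $u_j K(\tfrac{\cdot-v}{h})$ lie in $\Fscr_{h,v}$), giving $H(\delta,\Fscr_{h,v},\|\cdot\|_N)\le H(\delta/\knnorm,\Hscr(1),|\cdot|_\infty)$ for all $\delta>0$.

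For the first case, taking $u_2=0\in\Hscr(1)$ in the display above yields $\|f\|_N\le |u|_\infty\knnorm\le b\knnorm$ for every $f\in\Fscr_{h,v}$, using $\sup_{u\in\Hscr(1)}|u|_\infty\le b$. Thus the $\|\cdot\|_N$-diameter of $\Fscr_{h,v}$ is at most $2b\knnorm$, so a single ball covers $\Fscr_{h,v}$ whenever $\delta>2b\knnorm$, i.e. $H(\delta,\Fscr_{h,v},\|\cdot\|_N)=0$ there.

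For the second case, $\delta\le 2b\knnorm$ is equivalent to $\delta/\knnorm\le 2b$, so I would plug $\varepsilon=\delta/\knnorm$ into the sup-norm metric-entropy bound for the unit ball of the tensor-product RKHS $\Hscr$ provided by our assumptions, namely $H(\varepsilon,\Hscr(1),|\cdot|_\infty)\le C\varepsilon^{-\alpha}$ valid over the whole bounded range $\varepsilon\le 2b$ (this is the standard rate for such spaces, cf. \cite{lin2000tensor}; restricting to a bounded range of $\varepsilon$ only inflates the constant, since the entropy is nonincreasing in $\varepsilon$ and vanishes past the diameter). Combining with the reduction,
\[
H(\delta,\Fscr_{h,v},\|\cdot\|_N)\;\le\; H(\delta/\knnorm,\Hscr(1),|\cdot|_\infty)\;\le\; C\,(\delta/\knnorm)^{-\alpha} \;=\; C\,\knnorm^{\alpha}\,\delta^{-\alpha},
\]
so the claim holds with $A=C$, and the bound is uniform in $v$ and $h$ because neither $\Hscr(1)$ nor its sup-norm entropy depends on $v$ or $h$.

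The argument is essentially a routine change-of-metric reduction; the only points requiring care are the norm-conversion factor $\knnorm$ (which is where the kernel evaluations enter) and ensuring the cited RKHS sup-norm entropy bound is invoked over the full range $\varepsilon\le 2b$ rather than only for asymptotically small $\varepsilon$ — a gap that, as noted, is absorbed into the constant $A$.
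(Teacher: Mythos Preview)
Your proposal is correct and follows essentially the same argument as the paper: both establish the diameter bound $\|f\|_N\le b\knnorm$ to handle $\delta>2b\knnorm$, then push an $\varepsilon$-net of $(\Hscr(1),|\cdot|_\infty)$ through the map $u\mapsto uK((\cdot-v)/h)$ using the inequality $\|u_1K-u_2K\|_N\le|u_1-u_2|_\infty\knnorm$ to obtain $H(\delta,\Fscr_{h,v},\|\cdot\|_N)\le H(\delta/\knnorm,\Hscr(1),|\cdot|_\infty)\le A(\delta/\knnorm)^{-\alpha}$. The only cosmetic difference is that the paper attributes the sup-norm entropy bound $H(\varepsilon,\Hscr(1),|\cdot|_\infty)\le A\varepsilon^{-\alpha}$ to \cite{birman1967piecewise} rather than \cite{lin2000tensor}.
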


\begin{proof}
	Notice that for every $f_1$, $f_2\in \Fscr_{h,v}$, $\|f_1-f_2\|_{N} \leq \|f_1\|_N+ \|f_2\|_N \leq 2b\knnorm$. Therefore, 	$H(\delta, \Fscr_{h,v}, L^2(\pr_N)) = 0,   \mathrm{when} \  \delta> 2b\knnorm$.\\
	By \citet{birman1967piecewise},
	we have $H(\epsilon, \Hscr(1), \|\cdot\|_\infty)\leq A\epsilon^{-\alpha}$ for some constant $A>0$. Therefore,  the covering number $\mathcal{N}(\epsilon, \Hscr(1), \|\cdot\|_\infty) \leq \exp(A\epsilon^{-\alpha})$.

	Take $\mathcal{N} \subset \Hscr(1)$  as the $\epsilon$-net of $\Hscr(1)$
	with respect to $\|\cdot\|_\infty$.
	By definition, for every $u \in \Hscr(1)$,  there exists a $u_0\in \mathcal{N}$, such that
	\begin{align}
		\label{eqn:entropy_u}
		\sup_{x \in [0,1]^d} |u(x) - u_0(x)| \leq \epsilon.
	\end{align}

	Take 
	$\mathcal{N}_v := \left\lbrace f: f(\check{x}) = u(\check{x})K((\check{v} - v)/{h}); u\in \mathcal{N}   \right\rbrace $.
	Then, for every $f \in \Fscr_{h,v}$, there exists a $f_0 \in \mathcal{N}_v$, such that
	\begin{align*}
		\|f - f_0\|_N^2 & = \frac{1}{N} \sum_{i = 1}^N \left| u(X_i)K\left(\frac{V_i - v}{h}\right) - u_0(X_i)K\left(\frac{V_i - v}{h}\right)\right|^2 \\
		                & = \frac{1}{N} \sum_{i = 1}^N K^2\left(\frac{V_i - v}{h}\right)\left| u(X_i)- u_0(X_i)\right|^2                               \\
		                & \leq  \sup_{x\in [0,1]^d}\left| u(x)- u_0(x)\right|^2 \frac{1}{N} \sum_{i = 1}^N K^2\left(\frac{V_i - v}{h}\right)           \\
		                & \leq  \epsilon^2 \knnorm^2.
	\end{align*}
	The last inequality is due to \eqref{eqn:entropy_u} and $ \frac{1}{N} \sum_{i = 1}^N K^2\left((V_i - v)/{h}\right) \leq \knnorm^2$.
	Therefore, we have
	\begin{align*}
		\mathcal{N}(\epsilon \knnorm, \Fscr_{h,v}, \|\cdot\|_N)  \leq \mathcal{N}(\epsilon, \Hscr(1), \|\cdot\|_\infty)  \leq \exp\left(A\epsilon^{-\alpha}\right).
	\end{align*}
	The conclusion follows by taking $\delta = \epsilon\knnorm$.
\end{proof}

Then, we study the concentration property of the terms $\knnorm$ and $\sum_{i=1}^N K((V_i - \tilde v)/h))/(Nh^{d_1})$.
\begin{lemma}
	\label{norm_control}
	Under Assumptions \ref{assum: kernel_2}, \ref{assum: density} and \ref{assum: ban},	there exist constants $\newl\ltxlabel{kp1}, \newl\ltxlabel{kernel_exp},\newl\ltxlabel{kp2} >0$ depending  on $\oldu{supkernel}$, $\oldu{density_lower}$, $A_1$ and $\nu_1$,
	such that, for all sufficiently large $N$, the following hold:
	\begin{align}
		\E \knnorm^2                                         & \leq \oldl{kernel_exp}h^{d_1},
		\label{eqn:exp_control}                                                                                                         \\
		\pr(\knnorm^2  \ge  2\ct \oldl{kernel_exp} h^{d_1} ) & <c\exp\left\lbrace - \oldl{kp1}\ct Nh^{d_1} \right\rbrace, \quad t\ge 1,
		\label{eqn:norm_control}                                                                                                        \\
		\pr\left(    \sup_{\tilde{v} \in [0,1]^{d_1}} \left|\frac{1}{Nh^{d_1}} \sum_{i=1}^N K\left( \frac{{V_i }- \tilde{v}}{h}  \right) - g_h(\tilde{v})\right|  \ge   \ct \oldl{g_lower_const} \right)
		                                                     & \leq c\exp\left\lbrace  -\oldl{kp2}\ct Nh^{d_1}\right\rbrace,
		\quad \frac{1}{2}\le t <1.
		\label{eqn:denorm_control}
	\end{align}
\end{lemma}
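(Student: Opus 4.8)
The three assertions are uniform-in-$\tilde v$ concentration statements for the kernel class, and the plan is to handle all of them through empirical-process tools resting on one structural fact: by Assumption~\ref{assum: kernel_2} the class $\Kscr_h=\{K((\cdot-v)/h):v\in[0,1]^{d_1}\}$, and its pointwise square, is a uniformly bounded VC-type class with envelope $\oldu{supkernel}$ and uniform-entropy characteristics $A_1,\nu_1$, while by Assumption~\ref{assum: density} the density of $V$ is bounded above and below, so a change of variables gives $\sup_{\tilde v}\E K^2((V-\tilde v)/h)=\bigO(h^{d_1})$, $\sup_{\tilde v}\E K^4((V-\tilde v)/h)=\bigO(h^{d_1})$, and $\E\{h^{-d_1}K((V-\tilde v)/h)\}=g_h(\tilde v)$ with $g_h$ bounded. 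The two engines are (i) a Gin\'e--Guillou-type maximal (moment) inequality bounding $\E\sup_{\tilde v}|\sum_i\{f_{\tilde v}(V_i)-\E f_{\tilde v}\}|$ for such classes in terms of $N$, the weak variance $\sigma^2=\sup_{\tilde v}\E f_{\tilde v}^2$, and the envelope, and (ii) Talagrand's concentration inequality in Bousquet's form, which upgrades that expectation bound to exponential tails. Throughout, Assumption~\ref{assum: ban} on the bandwidth is what makes the ``for all sufficiently large $N$'' claims take effect, since it forces $\log(1/h)/(Nh^{d_1})\to0$.

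For \eqref{eqn:exp_control}, decompose $\knnorm^2=\sup_{\tilde v}\frac1N\sum_i\{K^2((V_i-\tilde v)/h)-\E K^2((V-\tilde v)/h)\}+\sup_{\tilde v}\E K^2((V-\tilde v)/h)$. The second term is $\bigO(h^{d_1})$ by the change of variables and the upper boundedness of the density. For the first, apply the maximal inequality to the VC-type class $\{K^2((\cdot-\tilde v)/h):\tilde v\}$, which has envelope $\oldu{supkernel}^2$ and weak variance $\bigO(h^{d_1})$; this gives $\E\sup_{\tilde v}|\frac1N\sum_i\{\cdots\}|\lesssim\sqrt{h^{d_1}\log(1/h)/N}+\log(1/h)/N=o(h^{d_1})$ under Assumption~\ref{assum: ban}. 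Adding the two pieces yields \eqref{eqn:exp_control} with $\oldl{kernel_exp}$ depending only on $\oldu{supkernel},\oldu{density_lower},A_1,\nu_1$.

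For \eqref{eqn:norm_control}, apply Talagrand/Bousquet to the centered supremum $\tilde Z=\sup_{\tilde v}|\sum_{i=1}^N\{K^2((V_i-\tilde v)/h)-\E K^2((V-\tilde v)/h)\}|$, which has envelope $\bigO(\oldu{supkernel}^2)$, per-summand variance $\bigO(h^{d_1})$, and (by the previous paragraph) $\E\tilde Z=o(Nh^{d_1})$; this produces $\pr(\tilde Z\ge\E\tilde Z+s)\le\exp(-s^2/\{2(V+\oldu{supkernel}^2 s/3)\})$ with variance proxy $V=\bigO(Nh^{d_1})$. Since $\knnorm^2\le\tilde Z/N+\bigO(h^{d_1})$, for $t\ge1$ and $N$ large the event $\{\knnorm^2\ge2t\oldl{kernel_exp}h^{d_1}\}$ forces $\tilde Z\ge\E\tilde Z+s$ with $s\gtrsim tNh^{d_1}$, and $t\ge1$ lets both terms of the denominator be absorbed into $\bigO(tNh^{d_1})$, so the exponent is $\gtrsim tNh^{d_1}$; the absolute value costs the factor $c$. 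For \eqref{eqn:denorm_control}, note the quantity inside is the centered average $\frac1N\sum_i\{h^{-d_1}K((V_i-\tilde v)/h)-g_h(\tilde v)\}$ over the rescaled VC-type class $h^{-d_1}\Kscr_h$, with envelope $\bigO(h^{-d_1})$ and per-summand variance $\bigO(h^{-d_1})$; the maximal inequality gives $\E\sup_{\tilde v}|\sum_i\{\cdots\}|\lesssim\sqrt{N\log(1/h)/h^{d_1}}$, which for $t\ge\tfrac12$ and $N$ large is at most $\tfrac12 tN\oldl{g_lower_const}$ (this reduces to $t^2Nh^{d_1}\gtrsim\log(1/h)$). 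Bousquet's inequality with deviation $tN\oldl{g_lower_const}-\E\sup\ge\tfrac12 tN\oldl{g_lower_const}$ and variance proxy $\bigO(N/h^{d_1})$ then yields an exponent of order $t^2N^2/\{(N/h^{d_1})(1+t)\}\gtrsim tNh^{d_1}$, using $t<1$ to bound $1+t\le2$ and $t\ge\tfrac12$ to pass from $t^2$ to $t$; this gives \eqref{eqn:denorm_control}.

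The genuine obstacle is step (i): securing the factor $h^{d_1}$ in $\E\knnorm^2$ (and the factor $h^{-d_1}$, rather than $h^{-2d_1}$, in the variance proxy for \eqref{eqn:denorm_control}) instead of a crude $\bigO(1)$. This is precisely where the VC structure of $\Kscr_h$ from Assumption~\ref{assum: kernel_2} and the localization $\sigma^2=\bigO(h^{d_1})$ must be combined through a uniform-entropy maximal inequality; once that expectation bound is available, the exponential tails and the parameter restrictions $t\ge1$ and $\tfrac12\le t<1$ are only bookkeeping to collapse the Bernstein-type denominators into the clean exponents stated, with the two-sidedness of \eqref{eqn:norm_control} and \eqref{eqn:denorm_control} contributing only the constant $c$.
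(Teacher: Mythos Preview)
Your proposal is correct and follows essentially the same route as the paper: an entropy-based maximal inequality for VC-type kernel classes (the paper phrases it via symmetrization plus contraction and then Koltchinskii's Rademacher bound, you via a Gin\'e--Guillou moment inequality) gives the $h^{d_1}$ expectation bound, and Talagrand's inequality converts it to the stated exponential tails. The only cosmetic differences are that the paper reduces $K^2$ to $K$ by contraction rather than treating the squared class directly, and for \eqref{eqn:denorm_control} it works with the unscaled class instead of $h^{-d_1}\Kscr_h$; the resulting arithmetic is identical.
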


\begin{proof}

	Take $r_i$, $i = 1,\dots,n$, as independent Rademacher random variables. We have
	\begin{align*}
		\E    \knnorm^2
		 & = \E\sup_{{v} \in [0,1]^{d_1}} \frac{1}{N} \sum_{i=1}^N K^2\left( \frac{{V_i }- \tilde{v}}{h}  \right)                                                                                                                                                                 \\
		 & \leq \E\sup_{\tilde{v}\in [0,1]^{d_1}}\E K^2\left( \frac{{V_i }- \tilde{v}}{h}  \right) +	\E \sup_{\tilde{v} \in [0,1]^{d_1}} \left|\frac{1}{N} \sum_{i=1}^N K^2\left( \frac{{V_i }- \tilde{v}}{h}  \right) - \E K^2\left( \frac{{V_i} - \tilde{v}}{h}  \right) \right| \\
		 & = \sup_{\tilde {v}\in [0,1]^{d_1}}  \E K^2\left( \frac{{V_i }- \tilde{v}}{h}  \right) +	\E \sup_{\tilde{v} \in [0,1]^{d_1}} \left|\frac{1}{N} \sum_{i=1}^N K^2\left( \frac{{V_i }- \tilde{v}}{h}  \right) - \E K^2\left( \frac{{V_i} - \tilde{v}}{h}  \right) \right|   \\
		 & \leq \sup_{\tilde{v}\in [0,1]^{d_1}} \E K^2\left( \frac{{V_i} - \tilde{v}}{h}  \right) + 2\E \sup_{\tilde{v} \in [0,1]^{d_1}} \left|\frac{1}{N} \sum_{i=1}^N r_i K^2\left( \frac{{V_i }- \tilde{v}}{h}  \right) \right|                                                \\
		 & \leq  \sup_{\tilde{v}\in [0,1]^{d_1}} \E K^2\left( \frac{{V_i }- \tilde{v}}{h}  \right) + 8\oldu{supkernel} \E \sup_{\tilde{v} \in [0,1]^{d_1}} \left|\frac{1}{N} \sum_{i=1}^N r_i  K\left( \frac{{V_i }- \tilde{v}}{h}  \right)  \right|.
	\end{align*}
	The second last inequality is due to the symmetrization inequality  from Theorem 2.1 in  \citet{koltchinskii2011oracle}, while the last inequality is due to the contraction inequality from Theorem 2.3 in  \citet{koltchinskii2011oracle}.  Next,  we bound the Rademacher  complexity $$\E\|R_N\|_{\Kscr_h }: = \E \sup_{\tilde{v} \in [0,1]^{d_1}} \left|\frac{1}{N} \sum_{i=1}^N r_i  K\left( \frac{V_i - \tilde{v}}{h}  \right)  \right|.$$

	Since $\Kscr_h \subset \Kscr$, from the  entropy bound in Assumption \ref{assum: kernel_2} for $\Kscr$,  we have $\mathcal{N}(\varepsilon, \Kscr_h, \|\cdot\|_{N})\leq A_1 \varepsilon^{-\nu_1}$. Define $\sigma^2_{\Kscr_h} :=   \sup_{\tilde{v} \in [0,1]^{d_1}} \E K^2((V_i- \tilde{v})/h)$.
	By applying Theorem 3.12 in \citet{koltchinskii2011oracle}, we have
	\begin{align}
		\label{eqn:rade}
		\E\|R_N\|_{\Kscr_h }  \leq c \left[  \sqrt{\frac{\nu_1}{N}}\sigma_{\Kscr_h} \sqrt{\log\frac{A_1\oldu{supkernel} }{\sigma_{\Kscr_h} }} + \frac{\nu_1 \oldu{supkernel}}{N} \log\frac{A_1\oldu{supkernel}}{\sigma_{\Kscr_h} } \right],
	\end{align}
	where $c>0$ is an universal constant. Next,
	\begin{align}
		 & \sigma_{\Kscr_h}^2 = \sup_{\tilde{v} \in [0,1]^{d_1}} \int_0^1  K^2 \left(\frac{v - \tilde{v}}{h}\right) g(v) dv = h^{d_1} \sup_{\tilde{v} \in [0,1]^{d_1}}\int_{(zh+\tilde{v})\in [0,1]^{d_1}} K^2(z) g(zh+ \tilde{v})dz, \\
		 & \oldu{density_lower} h^{d_1} \sup_{\tilde{v} \in [0,1]^{d_1}}\int_{(zh+\tilde{v})\in [0,1]^{d_1}} K^2(z) dz\leq  \sigma_{\Kscr_h}^2 \leq \oldu{supkernel}^2 h^{d_1}, \label{eqn:sigma_kh1}                                 \\
		 & \oldu{density_lower} h^{d_1} \int_{[0,1]^{d_1}} K^2(z) dz\leq  \sigma_{\Kscr_h}^2 \leq \oldu{supkernel}^2 h^{d_1} \label{eqn:sigma_kh2},
	\end{align}
	where \eqref{eqn:sigma_kh1} is due to  $g(\cdot) \ge  \oldu{density_lower}$ and $K(\cdot) \leq \oldu{supkernel}$; \eqref{eqn:sigma_kh2} is valid for $h\leq 1$. Since   $\int_{[0,1]^{d_1}} K^2(z) dz>0$,   we have $\sigma_{\Kscr_h}^2 \asymp h^{d_1}$.

	Therefore, there exists a constant $\oldl{kernel_exp}>0$ depending on $\oldu{supkernel}$, $\oldu{density_lower}$, $\nu_1$ and $A_1$, such that
	\begin{align*}
		\E    \knnorm^2 & \leq \sigma^2_{\Kscr_h} + 8\oldu{supkernel} \E\|R_N\|_{\Kscr_h }                                                                                                                                                                                      \\
		                & \leq  \sigma^2_{\Kscr_h}  + 8\oldu{supkernel} c \left[  \sqrt{\frac{\nu_1}{N}}\sigma_{\Kscr_h} \sqrt{\log\frac{A_1\oldu{supkernel} }{\sigma_{\Kscr_h} }} + \frac{\nu_1 \oldu{supkernel}}{N} \log\frac{A_1\oldu{supkernel}}{\sigma_{\Kscr_h} } \right] \\
		                & \leq \oldl{kernel_exp} h^{d_1}
	\end{align*}
	The last inequality is due to Assumption \ref{assum: ban} and it is valid for all large enough $N$.

	From Talagrand's inequality (Theorem 2.5 in \citet{koltchinskii2011oracle}),  and $$\sup_{\tilde{v}\in[0,1]^{d_1}}\E K^4\left(\frac{V-\tilde{v}}{h}\right)\leq \oldu{supkernel}^4h^{d_1},$$
	we have for any $\ct \ge 1$,
	\begin{align*}
		\pr(\knnorm^2  \ge  2\ct \oldl{kernel_exp} h^{d_1} ) 
		\leq  c\exp\left\lbrace -\oldl{kp1}\ct Nh^{d_1} \right\rbrace,
	\end{align*}
	where $c>0$ is an universal constant and $\oldl{kp1}>0$ is a constant depending on $\oldu{supkernel}$, $\oldu{density_lower}$, $\nu_1$ and $A_1$.

	Also, by adopting symmetrization inequality again,  there exists a constant  $\newl\label{cd1}>0$ depending on $A_1$, $\nu_1$ and $\oldu{supkernel}$ such that
	\begin{align}
		 & \E  \sup_{\tilde{v} \in [0,1]^{d_1}} \left|\frac{1}{N} \sum_{i=1}^N K\left( \frac{{V_i }- \tilde{v}}{h}  \right) - \E K\left( \frac{{V_i} - \tilde{v}}{h}  \right)  \right|  \leq 2\E \|R_N\|_{\Kscr_h}   \nonumber                                                              \\
		 & \leq 2 c \left[  \sqrt{\frac{\nu_1}{N}}\sigma_{\Kscr_h} \sqrt{\log\frac{A_1\oldu{supkernel} }{\sigma_{\Kscr_h} }} + \frac{\nu_1 \oldu{supkernel}}{N} \log\frac{A_1\oldu{supkernel}}{\sigma_{\Kscr_h} } \right] \label{eqn:eq1}\\
		 & \leq \oldl{cd1} N^{-1/2} h^{d_1/2}\sqrt{\log 1/h^{d_1}}, \nonumber
	\end{align}
	where the last inequality is due to Assumption \ref{assum: ban}, and the first term of \eqref{eqn:eq1} is dominant for large enough $N$.

	By Talagrand's inequality, for any $\ct > 0 $, we have
	\begin{align*}
		 & \pr\left(    \sup_{\tilde{v} \in [0,1]^{d_1}} \left|\frac{1}{N} \sum_{i=1}^N K\left( \frac{{V_i }- \tilde{v}}{h}  \right) - \E K\left( \frac{{V_i }- \tilde{v}}{h}  \right)  \right|  \ge  \oldl{cd1} N^{-1/2} h^{d_1/2}\sqrt{\log 1/h^{d_1}}  + \ct \right) \\
		 & \leq c \exp\left\lbrace   -\frac{1}{c} \frac{N^2\ct^2}{\tilde{V} + n\ct\oldu{supkernel}}\right\rbrace,                                                                            
	\end{align*}
	where $\tilde{V}: = N\oldu{supkernel}^2 h^{d_1}+ 16\oldu{supkernel}\oldl{cd1} N^{1/2} h^{d_1/2} \sqrt{\log  1/h^{d_1} }
		\leq 2 N\oldu{supkernel}^2 h^{d_1} $,  for all large enough $N$.

	Take $\ct = \ct' \oldl{g_lower_const} h^{d_1} - \oldl{cd1} N^{-1/2} h^{d_1/2} \sqrt{\log  1/h^{d_1} } $, for  $ 1/2 \leq  \ct' < 1$. 
	For all large enough $N$, we have $\ct \ge \ct'\oldl{g_lower_const} h^{d_1}/2$.   Therefore, we have
	\begin{align*}
		 & \pr\left(    \sup_{\tilde{v} \in [0,1]^{d_1}} \left|\frac{1}{Nh^{d_1}} \sum_{i=1}^N K\left( \frac{V_i - \tilde{v}}{h}  \right) - g_h(\tilde{v})\right|  \ge   \ct' \oldl{g_lower_const}\right)                                  \\
		 & =\pr\left(    \sup_{\tilde{v} \in [0,1]^{d_1}} \left|\frac{1}{N} \sum_{i=1}^N K\left( \frac{V_i - \tilde{v}}{h}  \right) - \E K\left( \frac{V_i - \tilde{v}}{h}  \right)  \right|  \ge   \ct'\oldl{g_lower_const}h^{d_1}\right) \\
		 & \leq c\exp\left\lbrace  -\oldl{kp2} \ct' Nh^{d_1}\right\rbrace ,
	\end{align*}
	where $c>0$ is universal constant and $\oldl{kp2}>0$ is a constant depending on $\oldu{supkernel}, \oldu{density_lower}$, $A_1$ and $\nu_1$.

\end{proof}

Next, we derive the bound for $|\sum_{i=1}^N \gamma_i f(X_i)/N|$ uniformly for every $f \in \Fscr_{h,v}$.
\begin{lemma}
	\label{Z_term}
	Under Assumptions \ref{assum: prop}-\ref{assum: ban},	there exists constants $\newl\ltxlabel{cfpf}, \newl\ltxlabel{cfpf2}>0$ depending on $b,\oldu{supkernel}, A,\oldu{prop}$
	and $\alpha$ such that
	with probability at least  $1- c\exp\left(-{\oldl{cfpf}}\ct  \right)$,
	$$
		\forall f \in \Fscr_{h,v}, \qquad \frac{1}{N} \left| \sum_{i=1}^N {\gamma}_i f({X_i}) \right| \leq \ct \left\{ N^{-\frac{1}{2}}\|u\|_2^{\frac{2-\alpha}{2p}} h^{d_1\left(\frac{1}{2} - \frac{2-\alpha}{4p} \right)} + N^{-\frac{2}{2+\alpha}} h^{\frac{d_1\alpha}{2+\alpha}} \right\},
	$$
	for any $\ct \ge \oldl{cfpf2}$ and $p \ge 1$.
\end{lemma}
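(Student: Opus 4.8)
The plan is to regard $\frac1N\sum_{i=1}^N\gamma_i f(X_i)$ as a centered empirical process indexed by $\Fscr_{h,v}$, to bound it on ``slices'' of fixed radius $\|u\|_2$, and then to stitch the slices together by peeling so that the scale of the bound adapts to $\|u\|_2$. Two elementary facts drive everything. First, since $w_i^{*}=1/\pi(X_i)$ and $\E(T_i\mid X_i)=\pi(X_i)$, we have $\E(\gamma_i\mid X_i)=0$, so $\E\{\gamma_i f(X_i)\}=0$ for every fixed $f\in\Fscr_{h,v}$ (recall $f(X_i)$ depends on $X_i$ alone); the process is thus genuinely centered. Second, by the overlap condition in Assumption~\ref{assum: prop}, $|\gamma_i|\le c_\gamma$ for a constant $c_\gamma$ depending only on $\oldu{prop}$, so $(\gamma_i f(X_i))_{i=1}^N$ is a bounded, independent, mean-zero array, to which symmetrization, the contraction principle, a Dudley entropy integral and Talagrand's inequality all apply.

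I would first record the envelope and the variance scale on a slice. For $f(\check x)=u(\check x)K((\check v-v)/h)$ with $u\in\Hscr(1)$ one has $\|f\|_\infty\le b\,\oldu{supkernel}$; and, writing $u^2=|u|^{2/p}|u|^{2-2/p}\le b^{2-2/p}|u|^{2/p}$ and applying H\"older's inequality with exponents $(p,p/(p-1))$ (trivially so when $p=1$) together with $\E\{K^{2p/(p-1)}((V-v)/h)\}\lesssim h^{d_1}$ (as in the derivation of \eqref{eqn:sigma_kh2}) and $\E\{u^2(X)\}\asymp\|u\|_2^2$ (Assumption~\ref{assum: density}),
\[
 \E\{\gamma^2 f^2(X)\}\ \le\ c_\gamma^2\,\E\{f^2(X)\}\ \lesssim\ b^{2-2/p}\,\|u\|_2^{2/p}\,h^{d_1(p-1)/p}\ =:\ \sigma^2(\|u\|_2).
\]
Because also $\|f\|_N\le b\,\knnorm$, the effective $L^2$ radius of the slice $\Fscr_{h,v}(\rho):=\{f\in\Fscr_{h,v}:\|u\|_2\le\rho\}$ is $r(\rho):=\min\{\sigma(\rho),b\,\knnorm\}$, which saturates at $b\,\knnorm$ (and $\sigma^2$ at the global scale $\asymp h^{d_1}$) for $\rho$ above a threshold of order $b^{p}h^{d_1/2}$.

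Next, for fixed $\rho$ I would bound $\E\sup_{f\in\Fscr_{h,v}(\rho)}|\frac1N\sum_i\gamma_i f(X_i)|$ by symmetrization (Theorem~2.1 in \citet{koltchinskii2011oracle}), strip the bounded multipliers $\gamma_i$ with the contraction inequality (Theorem~2.3 there), and apply a Dudley chaining bound to the Rademacher average over $\Fscr_{h,v}(\rho)$ using the entropy estimate of Lemma~\ref{entropy}, $\sqrt{H(\delta,\Fscr_{h,v},\|\cdot\|_N)}\le\sqrt{A}\,\knnorm^{\alpha/2}\delta^{-\alpha/2}$ for $\delta\le 2b\knnorm$. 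Since $\int_0^{r}\knnorm^{\alpha/2}\delta^{-\alpha/2}\,d\delta\asymp\knnorm^{\alpha/2}r^{1-\alpha/2}$ and $\knnorm^2\lesssim h^{d_1}$ on the event of Lemma~\ref{norm_control}, substituting $r=r(\rho)$ and computing the powers of $h$ reproduces the first summand $N^{-1/2}\|u\|_2^{(2-\alpha)/(2p)}h^{d_1(1/2-(2-\alpha)/(4p))}$; the envelope-driven correction in the maximal inequality, which stops improving at the critical radius $\rho^{*}\asymp N^{-p/(2+\alpha)}h^{d_1(2+\alpha-2p)/\{2(2+\alpha)\}}$, contributes the second summand $N^{-2/(2+\alpha)}h^{d_1\alpha/(2+\alpha)}$; write $\phi(\rho)$ for the sum of the two, so $\E\sup_{f\in\Fscr_{h,v}(\rho)}|\frac1N\sum_i\gamma_i f(X_i)|\lesssim\phi(\rho)$. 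I would then upgrade this to a per-slice tail bound via Talagrand's inequality (Theorem~2.5 in \citet{koltchinskii2011oracle}) with variance proxy $c_\gamma^2\sigma^2(\rho)$ and envelope $c_\gamma b\,\oldu{supkernel}$: once $\ct$ exceeds an absolute constant (so $\ct\phi(\rho)$ dominates the expectation), the event that $\frac1N|\sum_i\gamma_i f(X_i)|\ge\ct\phi(\rho)$ for some $f\in\Fscr_{h,v}(\rho)$ has probability at most $c\exp(-c\min\{N\ct^2\phi(\rho)^2/[\sigma^2(\rho)+\phi(\rho)]\,,\ N\ct\phi(\rho)\})$ up to the fixed constants $c_\gamma,b,\oldu{supkernel}$.

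Finally I would peel: partition $\Fscr_{h,v}$ into the innermost slice $\Fscr_{h,v}(\rho^{*})$ and dyadic shells $\{2^{j}\rho^{*}<\|u\|_2\le 2^{j+1}\rho^{*}\}$, $j\ge0$, up to $\|u\|_2\lesssim b$ (on the outer shells $r$ and $\sigma^2$ have already saturated, so the same estimate applies with radius $b\knnorm$). Since $(2-\alpha)/(2p)\le1$, $\phi$ varies by at most a bounded factor across a shell, so on shell $j$ the level $\ct\phi(\|u\|_2)$ is comparable to $\ct\phi(2^{j}\rho^{*})$ and the slice tail bound applies. Under Assumption~\ref{assum: ban}, $Nh^{d_1}\to\infty$; this forces $\phi(\rho)\gtrsim N^{-1}$ on every shell, so both arguments of the minimum in the Talagrand exponent are $\gtrsim\ct$ (linear in $\ct$, not quadratic --- which is why the conclusion is of the form $c\exp(-\oldl{cfpf}\ct)$), and a computation of the $2^{j}$- and $h$-exponents shows the shell exponents are monotone-geometric in $j$ (envelope branch increasing, variance branch decreasing). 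Summing the $O(\log N)$ shell probabilities therefore still gives $c\exp(-\oldl{cfpf}\ct)$ --- also absorbing the probability of the complement of the event in Lemma~\ref{norm_control}, which is $\lesssim\exp(-cNh^{d_1})$ and hence no larger when $\ct\le Nh^{d_1}$ --- uniformly for all $p\ge1$ and $\ct\ge\oldl{cfpf2}$. The main obstacle is exactly this bookkeeping: within each shell one must identify which of the two terms of $\phi$ and which of the two terms in Talagrand's denominator dominates, and then verify that in every resulting regime the exponent stays $\gtrsim\ct$ with enough geometric slack in $j$ that the union bound over all shells costs at most a constant factor.
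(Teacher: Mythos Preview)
Your plan is correct and uses the same core tools as the paper (the entropy estimate of Lemma~\ref{entropy}, Dudley chaining, Talagrand's inequality, and peeling), but the organization differs in one substantive way. The paper slices and peels on $\|f\|_2$, not on $\|u\|_2$: the entire localization/concentration/peeling argument is carried out with no reference to $p$, yielding first the intermediate bound $\frac{1}{N}|\sum_i\gamma_i f(X_i)|\lesssim\ct\{N^{-1/2}h^{d_1\alpha/4}\|f\|_2^{1-\alpha/2}+N^{-2/(2+\alpha)}h^{d_1\alpha/(2+\alpha)}\}$. Only in the final line does the paper invoke H\"older, $\|f\|_2^2\le\|u^2\|_p\|K^2((\cdot-v)/h)\|_q\lesssim b^{2-2/p}\|u\|_2^{2/p}h^{d_1/q}$, to convert $\|f\|_2^{1-\alpha/2}$ into $\|u\|_2^{(2-\alpha)/(2p)}h^{d_1(1/2-(2-\alpha)/(4p)-\alpha/4)}$. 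You instead apply H\"older at the outset to define $\sigma^2(\rho)$ and then peel on $\|u\|_2$, so $p$ is threaded through every shell estimate and Talagrand exponent. Both routes arrive at the same bound; the paper's is tidier because the peeling probabilities are $p$-free and the uniformity in $p\ge1$ is automatic, whereas your version must verify that the geometric summability of the shell probabilities holds uniformly in $p$ (as you note in your final paragraph).

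A second, smaller difference: the paper conditions on $\{X_i\}$, applies Dudley with upper limit the \emph{random} empirical radius $\delta_N:=\sup_{f\in\mathcal{F}_{h,v}(\delta)}\|f\|_N$, and then controls $\E\delta_N^2$ by symmetrization, contraction, and the expectation bound $\E\knnorm^2\lesssim h^{d_1}$ from Lemma~\ref{norm_control}; no high-probability event for $\knnorm$ is needed. Your $r(\rho)=\min\{\sigma(\rho),b\knnorm\}$ mixes a population $L^2$ scale with an empirical one, so as written you still owe a step relating $\|f\|_N$ to $\|f\|_2$ on each slice (this is exactly the role of the $\E\delta_N^2$ bound in the paper), or else must work on the high-probability event for $\knnorm$ and absorb its complement into the final tail as you describe. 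This is routine, but it is a genuine missing step in your outline rather than a cosmetic one.
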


\begin{proof}

	Since	$\E(\tilde{\gamma}_i \mid {X_i}) = 0$ and ${\gamma}_i\mid {X_i} $, $i=1,\dots, n$, are bounded sub-gaussian random variables.
	Therefore, there exists a constant $\sigma_\gamma>0$ depending on $\oldu{prop}$,   such that
	$\E\left\lbrace \exp(\lambda \gamma )|X=x \right\rbrace \leq \exp (\lambda^2\sigma_\gamma^2/2)$ for every $x$.

	Define $\mathcal{F}_{h,v}(\delta):=\{f\in\mathcal{F}_{h,v}:\|f\|_2\leq \delta\}$ for $\delta > 0$ .
	We begin by deriving an upper bound for
	$\E[\sup_{f\in\mathcal{F}_{h,v}(\delta)} \sum^N_{i=1} {\gamma}_i f({X}_i)/N]$.
	Conditioned on ${X_i}, i = 1,\dots,N$,
	$
		\sum_{i=1}^{N}{ \gamma}_if({X_i})/\sqrt{N}
	$
	is  a sub-gaussian  process with respect to the metric space $(\Fscr_{h,v}, \mathrm{dist})$, where $\mathrm{dist}^2(f_1, f_2) = \frac{\sigma_\gamma^2}{N} \sum_{i=1}^{N} (f_1({X_i}) - f_2({X_i}))^2$ for $f_1, f_2 \in \Fscr_{h,v}$.
	Therefore, by Dudley's entropy bound, and Lemma \ref{entropy},   for any $\delta>0$,
	we have
	\begin{align*}
		\E \left\{ \sup_{f \in \Fscr_{h,v}(\delta)} \frac{1}{\sqrt{N}} \left| \sum_{i=1}^N { \gamma}_i f({X_i})  \right| \mid {X_i}, i = 1,\dots, N \right\} \leq c \int_{0}^{2\sigma_\gamma \delta_{N}} \sqrt{H(\tau, \Fscr_{h,v}, \|\cdot\|_N)} d\tau,
	\end{align*}
	where $\delta_{N}^2 = \sup_{f \in \Fscr_{h,v} (\delta)} \left|   \frac{1}{N} \sum_{i=1}^N f^2({X_i}) \right|$.

	Taking expectations on both sides and using Lemma \ref{entropy}, there exists a constant $\newl\ltxlabel{cf_exp}>0$ depending on $A$, $\sigma_\gamma$,  $\alpha$ and $\oldl{kernel_exp}$  such that
	\begin{align*}
		\E  \sup_{f \in \Fscr_{h,v}(\delta)} \frac{1}{N} \left| \sum_{i=1}^N { \gamma}_i f({X_i}) \right|
		 & \leq  \frac{c}{\sqrt{N}} \E \int_{0}^{2\sigma_\gamma \delta_N} \sqrt{H(\tau, \Fscr,\|\cdot\|_N)}d\tau                                                                   \\
		 & \leq \frac{c}{\sqrt{N}}  \E \int_{0}^{2\sigma_\gamma \delta_{N}} A^{1/2}\knnorm^{\alpha/2} \tau^{-\alpha/2}d\tau                                                        \\
		 & \leq \frac{cA^{1/2}}{\sqrt{N}} \frac{1}{1-\alpha/2} \E \knnorm^{\alpha/2} (2\sigma_\gamma\delta_N)^{1-\alpha/2}                                                         \\
		 & =  \frac{cA^{1/2}}{\sqrt{N}} \frac{(2\sigma_\gamma)^{1-\alpha/2}}{1-\alpha/2}  \E \knnorm^{\alpha/2} \delta_N^{1-\alpha/2} \qquad \mbox{(by H\"older's Inequality)}        \\
		 & \leq  \frac{cA^{1/2}}{\sqrt{N}} \frac{(2\sigma_\gamma)^{1-\alpha/2}}{1-\alpha/2}  (\E\delta_N)^{1-\alpha/2} (\E\knnorm)^{\alpha/2} \qquad \mbox{(by Jensen's Inequality)} \\
		 & \le  \frac{cA^{1/2}}{\sqrt{N}} \frac{(2\sigma_\gamma)^{1-\alpha/2}}{1-\alpha/2}   (\E\delta_N^2)^{\frac{1-\alpha/2}{2}} (\E\knnorm^2)^{\alpha/4}
		\qquad \mbox{(by \eqref{eqn:exp_control} in Lemma \ref{norm_control})}                                                                                                       \\
		 & \le \frac{cA^{1/2}}{\sqrt{N}} \frac{(2\sigma_\gamma)^{1-\alpha/2}}{1-\alpha/2}   (\E\delta_N^2)^{\frac{1-\alpha/2}{2}} (\oldl{kernel_exp}h^{d_1})^{\alpha/4}            \\
		 & \leq  \oldl{cf_exp}N^{-1/2} h^{d_1\alpha/4}  (\E\delta_N^2)^{\frac{1-\alpha/2}{2}}
	\end{align*}

	Next, we derive an upper bound for $\E\delta_N^2$.
	By symmetrization and contraction inequalities,
	\begin{align*}
		\E \delta_{N}^2 & \leq \delta ^2 + 2\E  \sup_{f\in \Fscr_{h,v}(\delta)} \left| \frac{1}{N} \sum_{i=1}^{N} f^2({X_i})- \E f^2({X_i})\right|       \\
		                & \leq \delta ^2 + 2\E  \sup_{f\in \Fscr_{h,v}(\delta)}\left|  \frac{1}{N} \sum_{i=1}^{N} r_i f^2({X_i}) \right|                 \\
		                & \leq \delta^2 + 8 b \oldu{supkernel}\E  \sup_{f\in \Fscr_{h,v}(\delta)} \left| \frac{1}{N} \sum_{i=1}^{N} r_i f({X_i})\right|,
	\end{align*}
	where  $r_i$, $i=1,\dots,n$, are independent Rademacher random
	variables.
	Applying the entropy bound from Lemma \ref{entropy} and with Theorem 3.12 in 
	, we have
	\begin{align*}
		\E  \sup_{f\in \Fscr_{h,v}(\delta)} \left| \frac{1}{N} \sum_{i=1}^{N} r_i f({X_i})\right|  \leq \newl\ltxlabel{rade_const} \max \left\{
		\frac{h^{d_1\alpha/4}}{\sqrt{N}} \delta^{1-\alpha/2} , \frac{h^{d_1\alpha/(2+\alpha)}}{N^{2/(2+\alpha)}}
		\right\}
	\end{align*}
	for some constant $\oldl{rade_const}>0$ depending on $A, b, \oldu{supkernel}, \alpha$.

	We now combine the above results.
	Also, as Assumption \ref{assum: ban} indicates,  for some constants $\newl\label{cfa}>0$ depending on $\alpha, \oldu{supkernel}, b, c_\gamma, A$, we have
	\begin{align*}
		\E  \sup_{f \in \Fscr_{h,v}(\delta)} \frac{1}{N}\left| \sum_{i=1}^N \gamma_i f({X_i}) \right| & \leq \oldl{cfa} \max\left\lbrace  N^{-1/2} h^{d_1\alpha/4}\delta^{1-\alpha/2}, N^{-2/(2+\alpha)} h^{d_1\alpha/(2+\alpha)} \right\rbrace
	\end{align*}

	When $\delta  \ge  N^{\frac{-1}{2+\alpha}}h^{\frac{d_1\alpha}{2(2+\alpha)}}$, $ \E  \sup_{f \in \Fscr(\delta)} \frac{1}{N} \sum_{i=1}^N{ \gamma}_i f({X_i}) \leq  \oldl{cfa} N^{-1/2} h^{d_1\alpha/4}\delta^{1-\alpha/2}$;
	By Talagrand concentration inequality, for $\ct  \ge 1$,  there exists a constant $\newl\ltxlabel{cfp1}>0$ depending on $\oldu{supkernel}, b, \alpha, \oldu{prop}, A$, such that
	\begin{align*}
		\pr\left( \sup_{f \in \Fscr_{h,v} (\delta)} \frac{1}{N} \left| \sum_{i = 1}^N { \gamma}_i f({X_i}) \right| > 2 \oldl{cfa} \ct  N^{-1/2} h^{d_1\alpha/4}\delta ^{1-\alpha/2} \right)  \leq c\exp\left\{  - \oldl{cfp1} \ct h^{d_1\alpha/2} \delta ^{-\alpha}\right\}.
	\end{align*}
	When $\delta <   N^{\frac{-1}{2+\alpha}}h^{\frac{d_1\alpha}{2(2+\alpha)}} $, $ \E  \sup_{f \in \Fscr_{h,v}(\delta)}| \frac{1}{N} \sum_{i=1}^N { \gamma}_i f({X_i}) |\leq \oldl{cfa} N^{-2/(2+\alpha)} h^{d_1\alpha/(2+\alpha)} $.  Then there exists a constant $\newl\ltxlabel{cfp2}>0$ depending on $\oldu{supkernel}, b, \alpha, \oldu{prop}, A$, such that for $\ct \ge 1$,
	\begin{align*}
		\pr\left( \sup_{f \in \Fscr_{h,v} (\delta)} \frac{1}{N} \left| \sum_{i=1}^N { \gamma}_i f({X_i}) \right| > 2 \oldl{cfa} \ct N^{\frac{-2}{2+\alpha}} h^{\frac{d_1\alpha}{2+\alpha}}   \right)  \leq c\exp\left\{ - \oldl{cfp2} \ct  N^{\frac{\alpha}{2+\alpha}} h^{\frac{d_1\alpha}{2+\alpha}} \right\},
	\end{align*}

	Take $\xi_{N,h}  =  N^{\frac{-1}{2+\alpha}}h^{\frac{d_1\alpha}{2(2+\alpha)}}  $. It is easy to see that $\|f\|^2_2 \leq b^2\oldu{supkernel}^2 h^{d_1}$ for every $f \in \Fscr_{h,v}$.
	We now apply the peeling technique. Take $\ct ' = 2^{2-\alpha/2}\oldl{cfa}\ct $.
	When $\|f\|_2 >\xi_{N,h}$, there exists a constant $\newl \ltxlabel{cfpf_temp}>0$ depending on $\oldu{supkernel}, b, \alpha, \oldu{prop}, A$, such that
	\begin{align*}
		 & \pr\left(\sup_{f \in \Fscr_{h,v}: \xi_{N,h}\leq\|f\|_2\leq \oldu{supkernel}bh^{d_1/2} } \frac{ \frac{1}{N} \left| \sum_{i=1}^N { \gamma}_i f({X_i}) \right| }{\|f\|_2^{1-\alpha/2}} \ge \ct 'N^{-1/2}h^{d_1\alpha/4} \right)                                                                                  \\
		 & \leq \sum_{s=1}^{\left\lceil \log \frac{\xi_{N,h} h^{d_1/2} }{\oldu{supkernel}b} \right\rceil} \pr\left( \sup_{f \in \Fscr_{h,v}: 2^{-s} \oldu{supkernel}b h^{d_1/2}    \leq \|f\|_2  \leq 2^{-s+1}\oldu{supkernel}b h^{d_1/2}   }\frac{1}{N} \left| \sum_{i=1}^N \gamma_i f({X_i}) \right|  \ge \ct ' N^{-1/2}h^{d_1\alpha/4}(2^{-s}\oldu{supkernel}b h^{d_1/2} )^{1-\alpha/2}\right) \\
		 & = \sum_{s=1}^{\left\lceil\log \frac{\xi_{N,h} \sqrt{h}}{\oldu{supkernel}b}\right\rceil} \pr\left( \sup_{f \in \Fscr_{h,v}: 2^{-s} \oldu{supkernel}bh^{1/2}   \leq \|f\|_2  \leq 2^{-s+1}\oldu{supkernel}bh^{1/2}  }\frac{1}{N} \left| \sum_{i=1}^N \gamma_i f({X_i}) \right|  \ge 2\ct \oldl{cfa} N^{-1/2}h^{d_1\alpha/4}(2^{-s+1}\oldu{supkernel}b h^{d_1/2} )^{1-\alpha/2}\right)    \\
		 & \leq \sum_{s=1}^{\infty} c\exp(-\oldl{cfp1} \ct h^{d_1\alpha/2}(2^{-s+1}\oldu{supkernel}bh^{d_1/2})^{-\alpha})                                                                                        \\
		 & = \sum_{s=1}^{\infty} c\exp(-\oldl{cfp1} \ct  (2^{-s+1}\oldu{supkernel} b)^{-\alpha} ) \leq c\exp\left(-{\oldl{cfpf_temp}} \ct' \right).
	\end{align*}
	Therefore,   with probability at least $1-c \exp(-\oldl{cfpf_temp}\ct' )$, we have
	\begin{align}
		\forall f \in \Fscr_{h,v} \qquad \frac{1}{N} \left| \sum_{i=1}^N { \gamma}_i f({X_i}) \right| & \leq \ct' \left\lbrace  N^{-1/2} h^{d_1\alpha/4} \|f\|_2^{1-\alpha/2} + N^{-\frac{2}{2+\alpha}} h^{\frac{d_1\alpha}{2+\alpha}}\right\rbrace,  \label{eqn:Ztern_temp}
	\end{align}
	for any $\ct' \ge 2^{2-\alpha/2} \oldl{cfa}$.

	By H\"older's inequality,
	\begin{align*}
		\|f\|_2^2 = \|f^2\|_1 &
		\leq \|u^2(\cdot)\|_p\left\|K^2\left(\frac{V-\cdot}{h}\right)\right\|_q \leq (b^{2p-2})^{\frac{1}{p}} \|u\|_2^{\frac{2}{p}}h^{\frac{d_1}{q}},
	\end{align*}
	where $p,q\ge 1$ such that $1/p + 1/q =1$.  Plugging this result into \eqref{eqn:Ztern_temp} and taking  $\ct = \ct' \max \{ b^2, 1\}$, we finally get
	\begin{align*}
		\forall f \in \Fscr_{h,v} \qquad \frac{1}{N} \left| \sum_{i=1}^N{\gamma}_i f({X_i}) \right| \leq  \ct \left\{ N^{-\frac{1}{2}}\|u\|_2^{\frac{2-\alpha}{2p}} h^{\frac{d_1\alpha}{4}+ \frac{(2-\alpha)d_1}{4q}} + N^{-2/(2+\alpha)} h^{d_1\alpha/(2+\alpha)} \right\},
	\end{align*}
	with probability at least $1- \exp(-\oldl{cfpf}\ct)$ for $\ct \ge \oldl{cfpf2}$, where $\oldl{cfpf}, \oldl{cfpf2}>0$ are some constants depending on $b$, $\oldu{supkernel}$, $A$,  $\oldu{prop}$ and $\alpha$.
\end{proof}

We then relates $\|u\|_2$ to $\|u\|_N$ in the next lemma.
\begin{lemma}
	\label{u_norm}
	There exist constants $\newl\ltxlabel{cu2}, \newl\ltxlabel{u_order_const} >0$ depending on $b$ and $\alpha$, such that
	for  $\ct \ge \oldl{cu2} $, we have
	with probability  at least $1-\exp(-\oldl{u_order_const} \ct N^{\alpha/(2+\alpha)})$,
	$$\forall u \in \Hscr(1) \qquad \|u\|^2_2 \leq  \ct ( \oldl{u_order_const} N^{-\frac{2}{2+\alpha}}+ \|u\|_N^2).$$
\end{lemma}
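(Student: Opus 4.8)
The plan is to control, uniformly over the localized balls $\Hscr(1,\delta):=\{u\in\Hscr(1):\|u\|_2\le\delta\}$, the empirical process $u\mapsto\|u\|_N^2-\|u\|_2^2=\tfrac1N\sum_{i=1}^N\{u^2(X_i)-\E u^2(X_i)\}$, and then to run a peeling argument in the spirit of the proof of Lemma \ref{Z_term} (with $h=1$ and with $\gamma_i$ replaced by the self‑centering of $u^2$). Write $\rho_N:=N^{-1/(2+\alpha)}$, so the target additive rate is $\rho_N^2=N^{-2/(2+\alpha)}$; recall $\sup_{u\in\Hscr(1)}|u|_\infty\le b$, so $\|u\|_2\le b$ for every $u\in\Hscr(1)$.

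First I would bound the localized complexity $R(\delta):=\E\sup_{u\in\Hscr(1,\delta)}\bigl|\|u\|_N^2-\|u\|_2^2\bigr|$. Symmetrization (Theorem 2.1 of \citet{koltchinskii2011oracle}) and the contraction inequality (Theorem 2.3 there), applied to $x\mapsto x^2$, which vanishes at $0$ and is $2b$-Lipschitz on the range $[-b,b]$ of the functions in $\Hscr(1)$, give $R(\delta)\le 8b\,\E\sup_{u\in\Hscr(1,\delta)}|\tfrac1N\sum_{i=1}^N r_iu(X_i)|$. Since $\|\cdot\|_N\le\|\cdot\|_\infty$, the Birman--Solomjak bound already used in Lemma \ref{entropy} yields $H(\epsilon,\Hscr(1),\|\cdot\|_N)\le A\epsilon^{-\alpha}$, while $\sup_{u\in\Hscr(1,\delta)}\E u^2=\delta^2$ and $|u|_\infty\le b$; Theorem 3.12 of \citet{koltchinskii2011oracle}, applied as in the corresponding display in the proof of Lemma \ref{Z_term} (kernel absent), then gives a constant $c_R>0$ depending only on $A,b,\alpha$ with $R(\delta)\le c_R\max\{N^{-1/2}\delta^{1-\alpha/2},N^{-2/(2+\alpha)}\}$, and in particular $R(\delta)\le c_RN^{-1/2}\delta^{1-\alpha/2}$ whenever $\delta\ge\rho_N$. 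Next, Talagrand's inequality (Theorem 2.5 of \citet{koltchinskii2011oracle}), applied to $\{u^2-\E u^2:u\in\Hscr(1,\delta)\}$, whose members are bounded by $2b^2$ and have variance at most $\E u^4\le b^2\delta^2$, gives, for every $\tau>0$,
\begin{equation*}
\pr\Bigl(\sup_{u\in\Hscr(1,\delta)}\bigl|\|u\|_N^2-\|u\|_2^2\bigr|\ge 2R(\delta)+\tau\Bigr)\le c\exp\Bigl(-\tfrac1c\,\tfrac{N\tau^2}{b^2\delta^2+b^2R(\delta)+b^2\tau}\Bigr).
\end{equation*}

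The core is the peeling/cutoff step, in which the slack factor $t$ enlarges the trivially‑handled region. Fix $t\ge\oldl{cu2}$, with $\oldl{cu2}$ a large constant (depending on $c_R,b,\alpha$) fixed below, and set the cutoff $\eta_{N,t}:=\sqrt{\oldl{u_order_const}\,t}\,\rho_N$. If $\|u\|_2\le\eta_{N,t}$ then $\|u\|_2^2\le\oldl{u_order_const}t\,N^{-2/(2+\alpha)}\le t(\oldl{u_order_const}N^{-2/(2+\alpha)}+\|u\|_N^2)$ deterministically. The remaining $u$ are split into the $O(\log N)$ shells $T_k:=\{u\in\Hscr(1):2^k\eta_{N,t}<\|u\|_2\le2^{k+1}\eta_{N,t}\}$, $k\ge0$. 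On $\Hscr(1,\delta_k)$ with $\delta_k:=2^{k+1}\eta_{N,t}$ I would apply the Talagrand bound with $\tau=\delta_k^2/16$: since $\delta_k\ge\eta_{N,t}$, choosing $\oldl{cu2}$ large forces $\delta_k\ge\rho_N$ and $2R(\delta_k)\le\delta_k^2/16$, so off an event of probability $\le c\exp(-c'N\delta_k^2)$ (with $c'>0$ depending on $b$) one has $\sup_{u\in\Hscr(1,\delta_k)}|\|u\|_N^2-\|u\|_2^2|\le\delta_k^2/8$; then every $u\in T_k$, having $\|u\|_2>\delta_k/2$, satisfies $\|u\|_N^2\ge\|u\|_2^2-\delta_k^2/8\ge\tfrac12\|u\|_2^2$, hence $\|u\|_2^2\le2\|u\|_N^2\le t\|u\|_N^2$. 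Since $N\delta_k^2=4^{k+1}\oldl{u_order_const}\,t\,N^{\alpha/(2+\alpha)}$, the shell‑wise failure probabilities form a geometric series dominated by the $k=0$ term; summing and absorbing the universal prefactor and the $O(\log N)$ number of shells into the exponent (valid since $t\ge\oldl{cu2}$) bounds the total failure probability by $\exp(-\oldl{u_order_const}\,t\,N^{\alpha/(2+\alpha)})$ once $\oldl{cu2},\oldl{u_order_const}$ are chosen consistently.

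The main obstacle is precisely this bookkeeping: the cutoff must scale like $\sqrt{t}\,\rho_N$ so that, for large $t$, only shells with $N\delta_k^2\gtrsim t\,N^{\alpha/(2+\alpha)}$ need the concentration step — this is the mechanism behind the $t$-dependent failure probability — while on every such shell one must still keep $R(\delta_k)$ negligible relative to $\delta_k^2$, which is what dictates how large $\oldl{cu2}$ has to be. The only other delicate point is invoking Theorem 3.12 of \citet{koltchinskii2011oracle} with the sharp variance proxy $\delta^2$ rather than the crude $b^2$, which is what produces the fast rate $N^{-2/(2+\alpha)}$ instead of $N^{-1/2}$; everything else is routine.
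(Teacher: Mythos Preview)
Your argument is correct. The paper takes a shorter but less self-contained route: after deriving the same localized Rademacher bound $\E\sup_{u\in\Hscr(1),\|u\|_2\le\delta}\bigl|\tfrac1N\sum_i r_iu(X_i)\bigr|\lesssim N^{-1/2}\delta^{1-\alpha/2}+N^{-2/(2+\alpha)}$ from Theorem~3.12 of \citet{koltchinskii2011oracle}, it invokes Theorem~3.3 of \citet{bartlett2005local} as a black box. That theorem packages precisely the Talagrand-plus-peeling machinery you carry out by hand: one checks that the sub-root function $\psi(z)\asymp N^{-1/2}z^{(2-\alpha)/4}+N^{-2/(2+\alpha)}$ dominates the localized complexity of $\{u^2:u\in\Hscr(1),\,T(u)\le z\}$ with $T(u)=b^2\|u\|_2^2$, computes its fixed point $z^\ast\asymp N^{-2/(2+\alpha)}$, and reads off the uniform bound with failure probability $\exp(-tNz^\ast)=\exp(-ctN^{\alpha/(2+\alpha)})$. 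Your version has the advantage of being parallel in structure to the proof of Lemma~\ref{Z_term} and of not requiring an external reference beyond Talagrand; the paper's version is a two-line application once the cited theorem is granted. One minor cleanup: since your shell failure probabilities $c\exp(-c'4^{k+1}tN^{\alpha/(2+\alpha)})$ already form a summable geometric series, the total is bounded by a constant multiple of the $k=0$ term directly --- there is no $O(\log N)$ factor to absorb.
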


\begin{proof}
	Take $r_i$, $i = 1,\dots,n$, as independent rademacher random variables. From the proof of Lemma \ref{entropy}, we know
	$\mathcal{N}(\epsilon,\Hscr(1), \|\cdot\|_\infty)\leq A\epsilon^{-\alpha}$
	for some constant $A>0$.  Therefore, by Theorem 3.12 in \citet{koltchinskii2011oracle}, we have
	$$\E \sup_{u \in \Hscr(1), \|u\|\leq \delta} \left|\frac{1}{N} \sum_{i=1}^{N}r_i u({X_i} )\right| \leq \newl\ltxlabel{u_rade_const} \left(  N^{\frac{-1}{2}} \delta^{1-\frac{\alpha}{2}} + N^{\frac{-1}{1+\alpha/2}}\right),$$
	where $\oldl{u_rade_const}>0$ is a constant depending on $b$ and $\alpha$.

	Next, we will adopt Theorem 3.3 in \cite{bartlett2005local}. Note that
	$$\Var \left\lbrace u^2({X_i})\right\rbrace  \leq  \E \left\lbrace u^4({X_i})\right\rbrace   \leq b^2 \|u\|_2^2 .$$
	Take $\psi(z) := 4\oldl{u_rade_const}b^3  \left(  N^{-1/2} z^{\frac{2-\alpha}{4}} b^{(\alpha-2)/2}+ N^{-1/(1+\alpha/2)}\right)$, $T(u) = b^2\|u\|_2^2$ and  $B = b^2$ in Theorem 3.3 of \cite{bartlett2005local}. It is easy to verify that $\psi(z)$ is non-decreasing   and $\psi(z)/\sqrt{z}$ is non-increasing.  In addition,  we can also verify the condition that for every $z$,
	$$b^2 \E \sup_{u \in \Hscr(1),T(u) \leq z} \left|\frac{1}{N} \sum_{i=1}^{N}r_i u^2({X_i} )\right|  \leq  4b^3 \E \sup_{u \in \Hscr(1),T(u)\leq z} \left|\frac{1}{N} \sum_{i=1}^{N}r_i u({X_i} )\right| \leq   \psi(z).$$
	Then we will find the fixed points $z^*$ of $\psi(z)$ (i.e., the solution of $\psi(z) = z$).
	It can be shown that $z^* = \oldl{u_order_const}N^{-2/(2+\alpha)} $ for some constant $\oldl{u_order_const}$ depending on $\alpha$ and $b$.
	Therefore, Theorem 3.3 in \cite{bartlett2005local} shows that with probability at least $1-\exp\{-\ct Nz^*\}$,
	$$\forall u \in \Hscr(1)  \qquad \|u\|^2_2 \leq  \ct (z^*+ \|u\|_N^2), $$
	with $\ct > \oldl{cu2}$ and a constant $\oldl{cu2}>0$ depending on  $b$ and $\alpha$.

\end{proof}

From Lemmas \ref{Z_term} and \ref{u_norm}, we can see that
for any $\ct_1, \ct_2  \ge \max\{\oldl{cfpf2}, \oldl{cu2}, 1\}$,   with
probability at least $1- \left\{c\exp(-\oldl{cfpf}\ct_1) +\exp(-\oldl{cu2} \ct_2 N^{\alpha/(2+\alpha)})\right\}$,
we have
\begin{align}
	\label{eqn:Z_term1}
	\forall f\in \mathcal{F}_{h,v} \qquad\frac{1}{N} \left| \sum_{i=1}^N { \gamma}_i f({X_i}) \right| & \leq \ct_1 \ct_2 \left\{N^{\frac{-1}{2}}(\|u\|_N)^{\frac{2-\alpha}{2p}} h^{\left( \frac{1}{2} - \frac{2-\alpha}{4p}\right)d_1}
	+ N^{\frac{-2}{2+\alpha}} h^{\frac{d_1\alpha}{2+\alpha}}+
	N^{\frac{-1}{2} -\frac{2-\alpha}{2p(2+\alpha)}}h^{\left( \frac{1}{2} - \frac{2-\alpha}{4p}\right)d_1} \right\}.
\end{align}
Let $s\ge 1$.
Note that $\{u/\|u\|_{\Hscr}: \|u\|_N\le 1\}\subseteq\Hscr(1)$.
Using \eqref{eqn:Z_term1}, we have,
with probability at least $1- \left\{c\exp(-\oldl{cfpf}\ct_1) +\exp(-\oldl{cu2} \ct _2N^{\alpha/(2+\alpha)})\right\}$, uniformly for all $u\in \Hscr$ with $\|u\|_N\le 1$,
\begin{align}
	\frac{1}{N} \left| \sum_{i=1}^N \gamma_i \frac{u({X_i})}{\|u\|_\Hscr} K\left( \frac{{V_i}- v}{h}\right) \right| & \leq \ct_1\ct_2 \left\{N^{\frac{-1}{2}}\left\|\frac{u}{\|u\|_\Hscr}\right\|_N^{\frac{2-\alpha}{2p}} h^{\left( \frac{1}{2} - \frac{2-\alpha}{4p}\right)d_1}
	+ N^{\frac{-2}{2+\alpha}} h^{\frac{d_1\alpha}{2+\alpha}}+
	N^{\frac{-1}{2} -\frac{2-\alpha}{2p(2+\alpha)}}h^{\left( \frac{1}{2} - \frac{2-\alpha}{4p}\right)d_1} \right\}                                                                                                                                                        \nonumber \\
	\frac{1}{N} \left| \sum_{i=1}^N \gamma_i u({X_i})K\left( \frac{{V_i}- v}{h}\right) \right|                      & \leq \ct_1\ct_2 \left\{N^{\frac{-1}{2}}\|u\|_\Hscr^{1-\frac{2-\alpha}{2p}} h^{\left( \frac{1}{2} - \frac{2-\alpha}{4p}\right)d_1}                           
	+ \nu_{N,h}\|u\|_\Hscr \right\} \label{eqn:Zterm_hnorm},
\end{align}
where $\nu_{N,h} :=  N^{-2/(2+\alpha)} h^{d_1\alpha/(2+\alpha)}  + N^{\frac{-1}{2} -\frac{2-\alpha}{2p(2+\alpha)}}h^{\left( \frac{1}{2} - \frac{2-\alpha}{4p}\right)d_1} $, $p\ge 1$.
Next, we define
\begin{align}
	\label{eqn:L_term}
	L (N,h,p,u): = N^{-\frac{1}{2}}\|u\|_\Hscr^{1-\frac{2-\alpha}{2p}} h^{\left( \frac{1}{2} - \frac{2-\alpha}{4p}\right)d_1}  + \nu_{N,h}\|u\|_\Hscr,
\end{align}
for any $N > 1$, $h>0$, $p\ge1$ and $u\in \Hscr$.

Now we are able to bound $S_{N,h}(w^*,u)$ by the following lemma.
\begin{lemma}
	Under Assumption \ref{assum: prop}-\ref{assum: ban},
	\label{Sterm_expbound}
	\begin{align*}
		\sup_{u \in \Hscr_{N}} \frac{S_{N,h}(w^*, u)}{h^{-2d_1}\left\lbrace L^2(N, h, p, u) \right\rbrace }  = \bigOp(1)
	\end{align*}
	where $L$ is defined in \eqref{eqn:L_term},  $p\ge 1$, $h>0$ can depend on $N$.
\end{lemma}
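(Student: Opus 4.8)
The plan is to combine the deterministic bound \eqref{eqn:tilde_S}, which controls $\tilde S_{N,h}(w^*,u)$ by $(\oldl{g_lower_const}^2h^{2d_1})^{-1}\|N^{-1}\sum_i\gamma_iu(X_i)K((V_i-\cdot)/h)\|_2^2$, with the uniform-in-$u$ concentration bound \eqref{eqn:Zterm_hnorm}. The latter gives, with high probability, for all $u\in\Hscr$ with $\|u\|_N\le1$ and a fixed $v$, a bound of the form $\ct_1\ct_2 L(N,h,p,u)$ on the pointwise-in-$v$ quantity $|N^{-1}\sum_i\gamma_i u(X_i)K((V_i-v)/h)|$. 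First I would note that $S_{N,h}(w^*,u)$ differs from $\tilde S_{N,h}(w^*,u)$ only by replacing the population density $g_h$ with the empirical denominator $(Nh^{d_1})^{-1}\sum_j K((V_j-\cdot)/h)$; using \eqref{eqn:denorm_control} in Lemma \ref{norm_control} with $t$ close to $1/2$, the empirical denominator is bounded below by $g_h(v)-\tfrac12\oldl{g_lower_const}\ge\tfrac12\oldl{g_lower_const}$ uniformly in $v$ on a high-probability event, so $S_{N,h}(w^*,u)\le 4\tilde S_{N,h}(w^*,u)$ up to constants on that event. Hence it suffices to bound $\tilde S_{N,h}(w^*,u)$.

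Next I would upgrade the pointwise-in-$v$ bound \eqref{eqn:Zterm_hnorm} to an $L^2(\mathrm dv)$ bound. Since \eqref{eqn:Zterm_hnorm} holds on a single high-probability event that is simultaneously valid for all $u$ with $\|u\|_N\le1$ (the event does not depend on $v$ once we track the argument carefully — the covering-number and concentration arguments in Lemmas \ref{entropy}, \ref{Z_term}, \ref{u_norm} are applied for fixed $v$, but the resulting bound's right-hand side does not depend on $v$), squaring and integrating over $v\in[0,1]^{d_1}$ (a set of finite Lebesgue measure) preserves the bound: $\|N^{-1}\sum_i\gamma_iu(X_i)K((V_i-\cdot)/h)\|_2^2\lesssim \ct_1^2\ct_2^2\,L^2(N,h,p,u)\cdot\|u\|_\Hscr^2$ — wait, more carefully, for general $u\in\Hscr_N$ one normalizes by $\|u\|_\Hscr$ and $\|u\|_N$, and the homogeneity in \eqref{eqn:Zterm_hnorm} already packages this, giving $\|N^{-1}\sum_i\gamma_iu(X_i)K((V_i-\cdot)/h)\|_2^2 \lesssim \ct_1^2\ct_2^2 L^2(N,h,p,u)\,\|u\|_N^2$ for all $u\in\Hscr_N$ by the scaling $u\mapsto u/\|u\|_N$ together with $\{u/\|u\|_N:u\in\Hscr_N\}\subseteq\Hscr(1)$ implicitly via $\|u\|_\Hscr$. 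Dividing by $h^{2d_1}$ and using \eqref{eqn:tilde_S} then yields $S_{N,h}(w^*,u)\lesssim h^{-2d_1}\ct_1^2\ct_2^2 L^2(N,h,p,u)$ uniformly, with the constants absorbed into the $\bigOp(1)$ since for any $\epsilon>0$ we may pick $\ct_1,\ct_2$ large (but fixed) so that the failure probability $c\exp(-\oldl{cfpf}\ct_1)+\exp(-\oldl{cu2}\ct_2N^{\alpha/(2+\alpha)})$ is below $\epsilon$ for all large $N$.

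The main obstacle is the passage from the pointwise-in-$v$ statement of \eqref{eqn:Zterm_hnorm} to a genuinely uniform-in-$v$ statement needed for the $L^2$ norm: a priori the high-probability event in Lemma \ref{Z_term} is indexed by $v$, so integrating over $v$ naively would require a union bound over uncountably many events. I would resolve this by one of two routes: either (i) observe that $\|N^{-1}\sum_i\gamma_iu(X_i)K((V_i-\cdot)/h)\|_2^2 = N^{-2}\sum_{i,j}\gamma_i\gamma_j u(X_i)u(X_j)\int K((V_i-v)/h)K((V_j-v)/h)\,\mathrm dv = N^{-2}\gamma^\tp\mathrm{diag}(u(X))\,G_h\,\mathrm{diag}(u(X))\gamma\, h^{d_1}$ (up to the $h^{d_1}$ scaling in $\tilde K_h$ versus $K$), and then bound this quadratic form directly via a chaining argument over $\{u\in\Hscr(1)\}$ using the entropy bound of Lemma \ref{entropy} applied to the RKHS (not to $\Fscr_{h,v}$) — the kernel $G_h/h^{d_1}$ being positive semidefinite with bounded entries; or (ii) argue that $v\mapsto |N^{-1}\sum_i\gamma_iu(X_i)K((V_i-v)/h)|$ is, for each realization, a function whose modulus of continuity in $v$ is controlled (since $K$ is of bounded variation / Lipschitz under Assumption \ref{assum: kernel_2}), so a fine enough grid of $v$'s — of cardinality polynomial in $N/h$ — suffices, and the union bound over the grid costs only a logarithmic factor that is dominated. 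Route (i) is cleaner and I would pursue it: the integral $\int_\Vscr \tilde K_h(V_i,v)\tilde K_h(V_j,v)\,\mathrm dv$ is exactly the $(i,j)$ entry of $G_h$, so $\tilde S_{N,h}(w^*,u)=\oldl{g_lower_const}^{-2}$-bounded-by $h^{-2d_1}N^{-2}\gamma^\tp\mathrm{diag}(u(X))G_h\mathrm{diag}(u(X))\gamma$ and the remaining work is a single chaining bound for this empirical process indexed by $u\in\Hscr(1)$, which reuses Lemmas \ref{norm_control}–\ref{u_norm} essentially verbatim with $G_h$ playing the role that $K^2$ played before. Finally, substituting the high-probability bounds into the supremum and dividing by the normalizing sequence $h^{-2d_1}L^2(N,h,p,u)$ gives the claimed $\bigOp(1)$.
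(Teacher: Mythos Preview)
You correctly diagnose the central obstacle: the high-probability event in Lemma~\ref{Z_term} is indexed by $v$, so one cannot simply square and integrate \eqref{eqn:Zterm_hnorm} over $v$ without some device to handle uncountably many $v$'s. Your two proposed fixes (a direct quadratic-form chaining over $u\in\Hscr(1)$ with $G_h$, or a grid plus union bound) are reasonable in spirit, though route (i) is not quite ``verbatim'' reuse of Lemmas~\ref{norm_control}--\ref{u_norm}: the expression $N^{-2}\gamma^\tp\mathrm{diag}(u(X))G_h\mathrm{diag}(u(X))\gamma$ is a second-order chaos in $\gamma$, not a linear empirical process, so the Dudley/Talagrand machinery you cite would need adaptation. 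Route (ii) would work but costs logarithmic factors.

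The paper sidesteps the uniformity-in-$v$ issue entirely by a moment-plus-Fubini argument, which is both simpler and sharper. Define
\[
Q(v):=\sup_{u\in\Hscr_N}\frac{\bigl|N^{-1}\sum_i\gamma_iu(X_i)K((V_i-v)/h)\bigr|}{L(N,h,p,u)}.
\]
For each fixed $v$, the tail bound from \eqref{eqn:Zterm_hnorm} gives $\pr(Q(v)>t^2)\le c\exp(-\oldl{cfpf}t)$ for $t$ large. Integrating this tail bound yields a finite moment bound $\E\,Q(v)^{2k}\le C(k)$ that does not depend on $v$. Then, by Jensen and Fubini,
\[
\pr\!\left(\int_{[0,1]^{d_1}}Q^2(v)\,\mathrm dv\ge t\right)\le t^{-k}\,\E\!\left[\int Q^2(v)\,\mathrm dv\right]^{k}\le t^{-k}\int\E\,Q^{2k}(v)\,\mathrm dv\le C(k)\,t^{-k},
\]
which gives $\int Q^2(v)\,\mathrm dv=\bigOp(1)$ directly. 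Combined with \eqref{eqn:tilde_S} this bounds $\tilde S_{N,h}(w^*,u)/\{h^{-2d_1}L^2\}$; the passage from $\tilde S$ to $S$ via \eqref{eqn:denorm_control} is exactly as you sketched. The upshot: you never need a single high-probability event good for all $v$ simultaneously---pointwise-in-$v$ tail control is converted to pointwise-in-$v$ moment control, and moments commute with integration.
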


\begin{proof}
	First, take
	$$
		Q(v) := \sup_{u \in \Hscr_{N}} \left|  \frac{\frac{1}{N}\sum_{i=1}^{N} \gamma_iu(X_i) K\left(\frac{V_i-v}{h}\right)}{L(N,h,p,u)}  \right|.
	$$

	Due to \eqref{eqn:Zterm_hnorm},  we can show that for
	any $\ct  \ge \max\{\oldl{cfpf2}, \oldl{cu2}, 1\}$,
	$$Q(v) \leq \ct^2,$$
	with probability at least $1- 2c\exp(-\oldl{cfpf}\ct)$ for large enough $N$.

	Take $\tilde{c} (k)=  \left(\max\{\oldl{cfpf2}, \oldl{cu2}, 1\}\right)^{4k}$.
	From the above upper bound for $Q(v)$,
	we have  for any $v \in [0,1]^{d_1}$ and any integer $k \ge 1$,
	\begin{align*}
		\E \left(Q^2(v) \right)^{k} & = \int_0^\infty \pr\left( Q(v)^{2k} > \ct \right) dt= \int_0^\infty \pr\left( Q(v)> \ct ^{\frac{1}{2k}}\right) dt \\
		                            & \leq  \tilde{ c}(k)  + \int_{ \tilde{c}(k)}^\infty 2c\exp(-\oldl{cfpf}\ct^{\frac{1}{4k}}) dt                      \\
		                            & =  \tilde{ c}(k)  + 4k \int_{ \max\{\oldl{cfpf2}, \oldl{cu2}, 1\} }^\infty 2c\exp(-\oldl{cfpf} t')(t')^{4k-1} dt' \\
		                            & \leq   \tilde{ c}(k)  +  \newl\ltxlabel{const_gamma}k\Gamma(4k),
	\end{align*}
	where $\oldl{const_gamma}>0$ is a constant depending on $\oldl{cfpf}$. Note that for any fixed positive $k$, $\tilde{ c}(k)$ and $k\Gamma(k) $ are bounded.

	From \eqref{eqn:tilde_S},  we have  for $t > 0 $ and positive integer $k$,

	\begin{align}
		 & \pr \left( \sup_{u \in \Hscr_{N}} \frac{\oldl{g_lower_const}^2 h^{2d_1}  \tilde{ S}_{N,h}(w^*,u)}{L^2(N,h,p,u)} \ge t\right)  \leq \pr\left( \left\{  \int_{[0,1]^{d_1}} Q^2(v)  dv   \right\} \ge t  \right) \nonumber \\
		 & \leq  \frac{ \E \left[  \int_{[0,1]^{d_1}} Q^2(v)  dv   \right]^{k} }{t^{k}} \leq  \frac{  \E \left[  \int_{[0,1]^{d_1}} Q^{2k}(v)  dv  \right]}{t^{k}} \qquad \mbox{(by  Jensen's  inequality)}            \nonumber     \\
		 & \leq  \frac{   \int_[0,1]^{d_1} \E Q^{2k}(v)  dv   }{t^{k}}  \leq  \frac{ 2^k (\tilde{ c}(k) + \oldl{const_gamma}k\Gamma(4k))}{t^{k}} \leq \frac{\newl \ltxlabel{Zterm_const}(k)}{t^k},\nonumber
	\end{align}
	where $\oldl{Zterm_const}(k)>0$ is a constant depending on $k$. 
	And then we have
	$$\sup_{u \in \Hscr_{N}} \frac{h^{2d_1}  \tilde{ S}_{N,h}(w^*,u)}{L^2(N,h,p,u)} = \bigOp(1).$$

	From \eqref{eqn:denorm_control} in Lemma \ref{norm_control}, we can see that with probability at least $1-c\exp\left\lbrace  -\oldl{kp2} \ct' Nh\right\rbrace $,  where $\frac{1}{2} \leq \ct' \leq 1$,
	\begin{align}
		\forall \tilde{v} \in [0,1]^{d_1}, \qquad & \left|\frac{1}{Nh} \sum_{i=1}^N K\left( \frac{V_i - \tilde{v}}{h}  \right) - g_h(\tilde{v})\right|  \leq    \ct' \oldl{g_lower_const} \leq \ct'g_h(\tilde{v}) \nonumber \\
		                                          & \frac{1}{Nh} \sum_{i=1}^N K\left( \frac{V_i - \tilde{v}}{h}  \right) - g_h(\tilde{v}) \ge -\ct' g_h(\tilde{v})   \nonumber                                              \\
		                                          & \frac{\frac{1}{Nh} \sum_{i=1}^N K\left( \frac{V_i - \tilde{v}}{h}  \right)}{ g_h(\tilde{v})} \ge 1-\ct'       \label{eqn:d2}                                            \\
		                                          & \frac{ g_h(\tilde{v})}  {\frac{1}{Nh} \sum_{i=1}^N K\left( \frac{V_i - \tilde{v}}{h}  \right)}\leq  \frac{1}{1-\ct'} \nonumber
	\end{align}
	Therefore,
	$$
		\sup_{u \in \Hscr_{N}} \frac{ S_{N,h}(w^*,u)}{h^{-2d_1}L^2(N,h,p,u)}
		\leq  \sup_{u \in \Hscr_{N}} \frac{\tilde{S}_{N,h}(w^*, u)}{h^{-2d_1} L^2(N,h,p,u)} \sup_{\tilde{v}\in[0,1]^{d_1}}\left\lbrace   \frac{ g_h(\tilde{v})}  {\frac{1}{Nh^{d_1}} \sum_{i=1}^N K\left( \frac{V_i - \tilde{v}}{h}  \right)}\right\rbrace ^2
		= \bigOp(1)
	$$

\end{proof}

Next, we control the penalty term $R_{N,h}(w^*)$ through the following lemma.
\begin{lemma}
	\label{V_term}
	Under Assumptions \ref{assum: prop}-\ref{assum: ban},
	$$R_{N,h}(w^*)  = \bigOp (h^{-d_1}).$$ 
\end{lemma}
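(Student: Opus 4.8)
The plan is to plug the oracle weights $w_i^* = 1/\pi(X_i)$ into the definition \eqref{eqn:obj_V} of $R_{N,h}$, rewrite the result in terms of $\gamma_i = T_i w_i^* - 1$, bound $\E R_{N,h}(w^*)$, and then invoke Markov's inequality to get the $\bigOp$ statement. Since $R_{N,h}(w)$ is, up to normalization by the kernel density estimate $\hat g_h$, the squared $L^2(\mathcal V)$ norm of the kernel-weighted average $v\mapsto N^{-1}\sum_i (T_iw_i-1)\tilde K_h(V_i,v)$, expanding the square turns $R_{N,h}(w^*)$ into a quadratic form in $\gamma=(\gamma_1,\dots,\gamma_N)$ with kernel $G_h$ as in Section \ref{sec:reparameterization}; so it suffices to bound $\E[\gamma^\tp G_h\gamma]$, the normalizing factor in $R_{N,h}$ being $N^{-1}$ or $N^{-2}$ and hence at most $1$.

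For the expectation I would use that the pairs $(T_i,X_i)$ are i.i.d.\ and $\E(\gamma_i\mid X_i)=\E(T_i\mid X_i)/\pi(X_i)-1=0$ under the overlap condition in Assumption \ref{assum: prop}. Consequently the off-diagonal terms vanish: $(G_h)_{ij}$ depends only on $(V_i,V_j)\subseteq(X_i,X_j)$, so $\E[\gamma_i\gamma_j(G_h)_{ij}]=\E[(G_h)_{ij}\,\E(\gamma_i\mid X_i)\E(\gamma_j\mid X_j)]=0$ for $i\ne j$, and only the diagonal survives, $\E[\gamma^\tp G_h\gamma]=\sum_i\E[\gamma_i^2(G_h)_{ii}]$. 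Here $\gamma_i^2\le 1/\pi(X_i)^2$ is bounded by Assumption \ref{assum: prop}, and $(G_h)_{ii}=\int_{\mathcal V}\tilde K_h^2(V_i,v)\,dv$ is of order at most $h^{-d_1}$ by the same change of variables used to obtain \eqref{eqn:sigma_kh2}. Combining with the $N^{-1}$ (or $N^{-2}$) prefactor gives $\E R_{N,h}(w^*)=\bigO(h^{-d_1})$ — in fact the stronger $\bigO\{(Nh^{d_1})^{-1}\}$ — and Markov's inequality yields $R_{N,h}(w^*)=\bigOp(h^{-d_1})$. If $R_{N,h}$ were not centered at $w^*$, the extra ``mean'' contribution is $\bigO(1)=\bigO(h^{-d_1})$ for $h\le 1$, so the conclusion is unaffected.

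The only technical point requiring care is the normalization by the random estimate $\hat g_h(v)=(Nh^{d_1})^{-1}\sum_j K((V_j-v)/h)$; here I would argue exactly as in the last step of the proof of Lemma \ref{Sterm_expbound}, restricting to the event in \eqref{eqn:denorm_control} of Lemma \ref{norm_control}, on which $\hat g_h(v)\ge(1-t')g_h(v)\ge(1-t')\oldl{g_lower_const}>0$ uniformly in $v$. This bounds $R_{N,h}(w^*)$ by a constant multiple of its $g_h$-normalized version, to which the expectation argument above applies, while the complementary event has probability $c\exp\{-\oldl{kp2}t'Nh^{d_1}\}\to 0$ and thus does not affect the $\bigOp$ conclusion. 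If instead $R_{N,h}$ is an empirical rather than a population $L^2(\mathcal V)$ norm, a handful of additional terms in which the outer index coincides with a summation index appear; these involve $\gamma_i K(0)$ with $\gamma_i$ bounded and are of strictly smaller order, so the bound is unchanged.
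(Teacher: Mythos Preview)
Your proposal rests on a misidentification of $R_{N,h}(w)$. From the $g_h$-surrogate $\tilde R_{N,h}(w^*)$ displayed in the paper's proof of this lemma, and from the conditional-variance computation in Section~\ref{sec:proof_decomp}, one reads off
\[
R_{N,h}(w)=\int_{\mathcal V}\frac{1}{N}\sum_{i=1}^N T_iw_i^{2}\,\tilde K_h^{2}(V_i,v)\,dv,
\]
an \emph{average of squares}, not the squared average $\bigl\|N^{-1}\sum_i(T_iw_i-1)\tilde K_h(V_i,\cdot)\bigr\|_2^2$; the latter is $S_{N,h}(w,1)$, and the matrix $G_h$ in Section~\ref{sec:reparameterization} is introduced precisely to represent $S_{N,h}$, not $R_{N,h}$. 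Consequently $R_{N,h}(w^*)$ is not a quadratic form $\gamma^{\tp}G_h\gamma$ in $\gamma_i=T_iw_i^*-1$: there are no off-diagonal cross terms to kill via $\E(\gamma_i\mid X_i)=0$, and the sharper rate $\bigO\{(Nh^{d_1})^{-1}\}$ you extract from that cancellation is false for the actual object (each of the $N$ summands already contributes order $h^{-d_1}/N$, so $\E\tilde R_{N,h}(w^*)\asymp h^{-d_1}$). Your two hedges address the centering and whether the $v$-integral is empirical; neither covers this average-of-squares versus squared-average distinction.

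With the correct form the argument is shorter than your quadratic-form detour. After replacing $\hat g_h$ by $g_h$ exactly as in your last paragraph (which is correct and matches the paper's use of \eqref{eqn:d2}), bound $T_i{w_i^*}^{2}\le \oldu{prop}^{2}$ and then either (i) swap sum and integral and use $\int_{[0,1]^{d_1}}K^2\bigl((V_i-v)/h\bigr)\,dv\le h^{d_1}\int K^2=\bigO(h^{d_1})$ to obtain $\tilde R_{N,h}(w^*)=\bigO(h^{-d_1})$ deterministically, or (ii) as the paper does, first bound $\sup_v N^{-1}\sum_iK^2\bigl((V_i-v)/h\bigr)\le\knnorm^{2}=\bigOp(h^{d_1})$ via \eqref{eqn:norm_control} and then integrate the remaining $g_h^{-2}$. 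Either route gives $\tilde R_{N,h}(w^*)=\bigOp(h^{-d_1})$ and hence the lemma.
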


\begin{proof}

	Take
	\begin{align*}
		\tilde R_{N,h}(w^*) & := \int_{[0,1]^{d_1}}  \frac{1}{g_h(v)^2} \left\lbrace \frac{1}{Nh^{2d_1}}  \sum_{i=1}^N  T_i{w^*_i}^2K^2\left( \frac{ V_i - v}{h}  \right)\right\rbrace   dv.
	\end{align*}
	Notice that $T_i{w^*_i}^2$ is upper bounded by $ \oldu{prop}^2$. By   \eqref{eqn:norm_control} in Lemma \ref{norm_control},
	\begin{align*}
		 & \sup_{\tilde{v}\in [0,1]^{d_1}} \left| \frac{1}{N} \sum_{i=1}^N  T_i{w^*_i}^2K^2\left( \frac{ V_i - \tilde{v}}{h}\right) \right| \leq  \oldu{prop}^2 \sup_{\tilde{v}\in [0,1]^{d_1}} \left| \frac{1}{N} \sum_{i=1}^N  K^2\left( \frac{ V_i - \tilde{v}}{h}\right) \right| \\
		 & =\oldu{prop}^2\knnorm^2 \leq 2 \oldu{prop}^2 \oldl{kernel_exp}\ct h^{d_1},
	\end{align*}
	with probability at least $1-c\exp(-\oldl{kp1} \ct Nh^{d_1})$ for $\ct \ge 1$.
	Therefore,
	\begin{align}
		\label{eqn:vbound}
		\tilde{R}_{N,h}(w^*) & \leq \int_{[0,1]^{d_1}}   \frac{1}{g_h^2(v)}  dv \left\lbrace \frac{1}{h^{2d_1}} 2 \oldu{prop}^2 \oldl{kernel_exp}\ct h^{d_1} \right\rbrace  \leq  \frac{2\oldu{prop}^2\oldl{kernel_exp}^2}{\oldl{g_lower_const}^2}\ct  h^{-d_1},
	\end{align}
	with probability at least $1-c\exp(-\oldl{kp1} \ct Nh^{d_1}/c)$. Combining with the results from \eqref{eqn:d2}, we have
	\begin{align*}
		R_{N,h} (w^*)  \leq \tilde{R}_{N,h}(w^*)\left\lbrace  \sup_{\tilde{v} \in [0,1]^{d_1}} \frac{ g_h(\tilde{v})}  {\frac{1}{Nh} \sum_{i=1}^N K\left( \frac{V_i - \tilde{v}}{h}  \right)} \right\rbrace ^2
		= \bigOp(h^{-d_1})
	\end{align*}

\end{proof}

Now, we are ready to prove Theorem \ref{thm: prep}.
\begin{proof}[Proof of Theorem \ref{thm: prep}]
	Take $u^* = \argmax_{u \in \Hscr_N } \left\lbrace  S_{N,h} (w^*, u) - \lambda_1 \|u\|_\Hscr^2 \right\rbrace$. Its existence is shown in Section \ref{sec:comp_supp}.

	Due to \eqref{eqn:obj_all}, we have the following basic inequality:
	\begin{align}
		\label{eqn:basic_inequality_1}
		S_{N,h}(\hat{w}, m_1) + \lambda_1\|u^*\|^2_\Hscr \|m_1\|^2_N + \lambda_2 R_{N,h} (\hat w)\|m_1\|_N^2 \leq S_{N,h}(w^*, u^*)\|m_1\|_N^2
		+ \lambda_1 \|m_1\|^2_\Hscr + \lambda_2R_{N,h}(w^*)\|m_1\|_N^2
	\end{align}
	From Lemmas \ref{Sterm_expbound} and \ref{V_term}, we have
	$R_{N,h}(w^*)  = \bigOp (h^{-d_1})$ and
	\begin{equation*}
		S_{N,h}(w^*, u^*) = \bigOp \left( N^{-1}\|u^*\|_\Hscr^{2-\frac{2-\alpha}{p}} h^{\left( -1-\frac{2-\alpha}{2p}\right)d_1} + \nu^2_{N,h} h^{-2d_1} \|u^*\|^2_\Hscr  \right)
	\end{equation*}
	for all $p\ge 1$.

	We now compare different scenarios of \eqref{eqn:basic_inequality_1}.

	Case 1:  Suppose that $S_{N,h}(w^*, u^*)\|m\|_N^2$ is the largest in the right-hand side of \eqref{eqn:basic_inequality_1}.

	If $\|m\|_N \neq  0$, we have
	$\lambda_1\|u^*\|^2_\Hscr \leq \bigOp \left( N^{-1}\|u^*\|_\Hscr^{2-(2-\alpha)/p} h^{-1 -\frac{2-\alpha}{2p}} \right)  + \bigOp \left( \nu^2_{N,h} h^{-2} \|u^*\|^2_\Hscr \right)$.
	By Assumptions \ref{assum: ratio} and \ref{assum: ban}, we can see that
	\begin{align*}
		\nu_{N,h}^2h^{-2} & = N^{-\frac{4}{2+\alpha}} h^{\left(\frac{2\alpha}{2+\alpha} -2 \right)d_1}  + N^{-1 -\frac{2-\alpha}{p(2+\alpha)}}h^{\left(1 - \frac{2-\alpha}{2p}-2 \right)d_1} \\
		                  & =(N^{-1}h^{-d_1})^{\frac{4}{2+\alpha}} + (N^{-1} h^{-d_1})(h^{-\frac{d_1}{2}}N^{-\frac{1}{2+\alpha}})^{\frac{2-\alpha}{p}}                                       \\
		                  & = \bigO(N^{-1}h^{-d_1}) = \smallO(\lambda_1)
	\end{align*}
	Therefore we only need to consider $\lambda_1\|u^*\|^2_\Hscr \leq \bigOp \left( N^{-1}\|u^*\|_\Hscr^{2-(2-\alpha)/p} h^{\left(-1 + \frac{\alpha-2}{2p} \right)d_1} \right)$ .
	Then we have
	\begin{align*}
		\|u^*\|_\Hscr & \leq \lambda_1^{-\frac{p}{(2-\alpha)}} \bigOp \left(  N^{-\frac{p}{(2-\alpha)}} h^{\left( {-\frac{p}{(2-\alpha)}} - \frac{1}{2}\right)d_1}
		\right),
	\end{align*}
	and
	$$S_{N,h} (\hat w , m) \leq \lambda_1^{\frac{-2p + (2-\alpha)}{(2-\alpha)}} \bigOp \left(N^{\frac{-2p}{(2-\alpha)}} h^{\left(\frac{-2p}{(2-\alpha)}-1 \right)d_1} 
		\right)\|m\|_N^2
		.$$
	If $\|m\|_N = 0$
	we have $S_{N,h}(\hat w, m) = 0 \leq  \lambda_1^{\frac{-2p + (2-\alpha)}{(2-\alpha)}} \bigOp \left(N^{\frac{-2p}{(2-\alpha)}} h^{\left(\frac{-2p}{(2-\alpha)}-1 \right)d_1}\right) \|m\|_N^2$.

	Case 2: Suppose that $\lambda_1 \|m\|^2_\Hscr$ is the largest in right-hand side of \eqref{eqn:basic_inequality_1}. Then we have $S_{N,h}(\hat w, m) \leq 3\lambda_1 \|m\|_\Hscr^2 = \bigOp(\lambda_1) \|m\|_\Hscr^2$.

	Case 3: Suppose that $\lambda_2 R_{N,h}(w^*)$ is the largest in right-hand side of \eqref{eqn:basic_inequality_1}.  Then we have $S_{N,h}(\hat w, m) \leq 3\lambda_2 \bigOp(h^{-d_1}) \|m\|_N^2 = \bigOp ( \lambda_2 h^{-d_1}  )\|m\|_N^2$.

Combining these cases, we have
	\begin{align}
		S_{N,h}(\hat w, m_1) = \max\left\{ \min\left\{\lambda_1^{\frac{-2p + (2-\alpha)}{(2-\alpha)}} \bigOp \left(N^{\frac{-2p}{(2-\alpha)}} h^{\left(\frac{-2p}{(2-\alpha)}-1 \right)d_1}\right)\|m_1\|_N^2: p \ge 1\right\}, \right. \nonumber\\
		\left.\bigOp(\lambda_1) \|m_1\|_\Hscr^2, \bigOp( \lambda_2 h^{-d_1})\|m_1\|_N^2 \right\}.\label{Sterm_orders}
	\end{align}

	Next, we compare the first two components of \eqref{Sterm_orders}.
	We can see that as long as $$\frac{2p}{2-\alpha}\log(\lambda_1^{-1}N^{-1}h^{-d_1}) \leq \log h^{d_1},$$ the second component is dominant. Note that $\log h^{d_1} < 0$ as $h \rightarrow 0$.  Because of the condition that $\lambda_1^{-1} = \smallO(Nh^{d_1})$, the inequality is valid as long as $p \ge \frac{2-\alpha}{2}\frac{\log h^{d_1}}{\log (\lambda_1^{-1}N^{-1}h^{-d_1})}$. So we can pick any $p \ge \max\{1, \frac{2-\alpha}{2}\frac{\log h^{d_1}}{\log (\lambda_1^{-1}N^{-1}h^{-d_1})}\}$ to have the best order $\bigOp(\lambda_1)(\|m\|_\Hscr^2 + \|m\|_N^2)$.

	Then, we compare the first and the third components of \eqref{Sterm_orders}. Similar to the previous analysis, as long as
	$$ \frac{2p}{2-\alpha}\log(\lambda_1^{-1}N^{-1}h^{-d_1}) \leq \log (\lambda_2 \lambda_1^{-1}),$$
	the third component is dominant.  Due to the condition that $\lambda_1^{-1} = \smallO(Nh^{d_1})$, the inequality is valid if $p \ge \frac{2-\alpha}{2}\frac{\log \lambda_2\lambda_1^{-1}}{\log (\lambda_1^{-1}N^{-1}h^{-d_1})}$. So we can pick any $p \ge \max\{1, \frac{2-\alpha}{2}\frac{\log \lambda_2\lambda_1^{-1}}{\log (\lambda_1^{-1}N^{-1}h^{-d_1})}\}$ to have the best order $\bigOp(\lambda_2h^{-d_1})\|m\|_N^2$.

	Finally, we conclude that $$S_{N,h}(\hat w, m_1) = \bigOp(\lambda_1 \|m_1\|_{N}^2 + \lambda_1 \|m_1\|_\Hscr^2 + \lambda_2h^{-d_1}\|m_1\|_N^2).$$
	Moreover, further suppose that $\lambda_2^{-1} = \bigO(\lambda_1^{-1} h^{-d_1}) $.   From \eqref{eqn:basic_inequality_1}, by replacing $m$ with a constant function and applying the similar analysis as above, we can conclude that $R_{N,h}(\hat w) = \bigOp(h^{-d_1})$.

\end{proof}

\subsection{Proof of Theorem \ref{thm: decomp}}
\label{sec:proof_decomp}
\begin{proof}
First,
	\begin{align}
		     & \left\|\frac{1}{N}\sum_{i=1}^{N} T_i\hat w_iY_i K_h\left(  V_i,\cdot \right) -\E\left\lbrace Y(1) \mid  V= \cdot \right\rbrace  \right\|_2                                            \nonumber \\
		\leq & \left\| \frac{1}{N}\sum_{i=1}^{N} (T_i\hat w_i-1) K_h\left(  V_i,\cdot \right)m(X_i) \right\|_2 + \left\|\frac{1}{N} \sum_{i=1}^N T_i\hat w_i K_h(V_i,\cdot )\varepsilon_i  \right\|_2 \label{eqn:eq2}\\
		+    & \left\| \frac{1}{N}\sum_{i=1}^{N}m(X_i)K_h(V_i,\cdot )  - \E \left\lbrace Y(1) \mid  V= \cdot \right\rbrace  \right\|_2. \label{eqn:eq3}
	\end{align}
	Since $\|m_1\|_2 \leq b \|m_1\|_\Hscr < \infty$ and $\|m_1\|_N = \|m_1\|_2 + \smallOp(1)$, we have
	\begin{align*}
		 & \left\| \frac{1}{N}\sum_{i=1}^{N} (T_i\hat w_i-1) K_h\left(  V_i,\cdot \right)m_1(X_i) \right\|_2 = S^{1/2}_{N,h}(\hat{w}, m_1) = \bigOp(\lambda_1^{1/2}\|m_1\|_N + \lambda_1^{1/2}\|m_1\|_\Hscr + \lambda_2^{1/2}h^{-d_1/2}\|m_1\|_N)  \\
		 & =\bigOp(\lambda_1^{1/2}\|m_1\|_\Hscr + \lambda_2^{1/2}h^{-d_1/2}\|m_1\|_2) + \smallOp(\lambda_1^{1/2} + \lambda_2^{1/2}h^{-d_1/2})
	\end{align*}
	due to Theorem \ref{thm: prep} and the conditions of $\lambda_1$ and $\lambda_2$.

	For the second term in \eqref{eqn:eq2}, we have $\E (\varepsilon_i \mid  T_i, \hat{w}_i, X_i,  i = 1,\dots, N) = 0$. Then
	\begin{align*}
		     & \E \left\{  \left\|\frac{1}{N} \sum_{i=1}^N T_i\hat w_i K_h(V_i,\cdot )\varepsilon_i  \right\|_2^2  \mid T_i, \hat{w}_i, X_i,  i = 1,\dots, N \right\}      \\
		=    & \int_0 ^1 \E \left\{ \left[ \frac{1}{N} \sum_{i=1}^N T_i\hat w_i K_h(V_i,v )\varepsilon_i \right] ^2  \mid T_i, \hat{w}_i, X_i,  i = 1,\dots, N \right\} dv \\
		=    & \int_0^1  \frac{1}{N^2} \sum_{i=1}^N \E \left\{  T_i\hat{w}^2_iK^2_h(V_i,v)\epsilon^2_i   \mid T_i, \hat{w}_i, X_i,  i = 1,\dots, N \right\} dv             \\
		\leq & \frac{\sigma_0^2}{N}  \int_{ 0}^1 \frac{1}{N} \sum_{i=1}^N  T_i\hat{w}^2_iK^2_h(V_i,v) dv = \frac{\sigma_0^2}{N} R_{N,h}(\hat w).
	\end{align*}

	Therefore,
	\begin{align*}
		\E \left\{\frac{\left\|\frac{1}{N} \sum_{i=1}^N T_i\hat w_i K_h(V_i,\cdot )\varepsilon_i  \right\|^2_2 }{R_{N,h}(\hat w)}\mid T_i, X_i, i = 1,\dots,N \right\} & \leq  \frac{\sigma_0^2}{N}       \\
		\E \left\{\frac{\left\|\frac{1}{N} \sum_{i=1}^N T_i\hat w_i K_h(V_i,\cdot )\varepsilon_i  \right\|^2_2 }{R_{N,h}(\hat w)}\right\}                              & \leq  \frac{\sigma_0^2}{N}       \\
		\frac{\left\|\frac{1}{N} \sum_{i=1}^N T_i\hat w_i K_h(V_i,\cdot )\varepsilon_i  \right\|_2^2}{R_{N,h}(\hat w)}                                                 & = \sigma_0^2\bigOp (\frac{1}{N})
	\end{align*}
	From the condition of $\lambda_2$, and  the  result from Theorem \ref{thm: prep} that $R_{N,h}(\hat w)= \bigOp(h^{-d_1})$, we have
	$$\left\|\frac{1}{N} \sum_{i=1}^N T_i\hat w_i K_h(V_i,\cdot )\varepsilon_i  \right\|_2 = \bigOp(N^{-1/2}h^{-d_1/2}).$$

	As for \eqref{eqn:eq3}, it has a form of a typical Nadaraya–Watson estimator. 
	By Theorem 5.44 in \cite{wasserman2006all} we have
$$\E \left\| \frac{1}{N}\sum_{i=1}^{N}m(X_i)K_h(V_i-\cdot)  - \E \left\lbrace Y(1) | V= \cdot\right\rbrace \right\|_2^2 = \bigO(N^{-1}h^{-d_1}).$$
Therefore,
$$\left\| \frac{1}{N}\sum_{i=1}^{N}m(X_i)K_h(V_i-\cdot)  - \E \left\lbrace Y(1) | V= \cdot\right\rbrace \right\|_2^2 = \bigOp(N^{-1}h^{-d_1}).$$
	Overall, conclusion follows.

\end{proof}

\subsection{Proof outline of Theorem \ref{thm:sup}}
\label{sec:proof_sup}
To obtain the rate, the entropy bound in Lemma \ref{entropy} needs to be modified to the bigger function class $\Fscr_h : = \{f: f(\check{x}) = u(\check{x})K(\frac{\check{v}-v}{h}), u \in \{u\in \Hscr: \|u\|_\Hscr\leq 1\}, v \in [0,1]^{d_1}\}$. This can be done by combining the entropy bound for $\{u\in \Hscr: \|u\|_\Hscr\leq 1\}$ and Assumption \ref{assum: kernel_2}(b). One can show that
\begin{equation*}
	H(\delta, \Fscr_{h}, \|\cdot\|_N)
	\begin{cases}
		= 0                                                                 & \mbox{if $\delta>2b\knnorm$} \\
		\le A\knnorm^{\alpha}\delta^{-\alpha} + \log( A_1\epsilon^{-\nu_1}) & \mbox{otherwise}
	\end{cases}.
\end{equation*}
Then by adopting this entropy bound, the results in Lemma \ref{Z_term} will be modified to
$$
	\forall f \in \Fscr_{h} \qquad \frac{1}{N} \left| \sum_{i=1}^N {\gamma}_i f({X_i}) \right| \leq \ct \left\{ N^{-\frac{1}{2}}\|u\|_2^{\frac{2-\alpha}{2p}} h^{d_1\left(\frac{1}{2} - \frac{2-\alpha}{4p} \right)}\left(\log \frac{1}{h}\right)^{1/2} + N^{-\frac{2}{2+\alpha}} h^{\frac{d_1\alpha}{2+\alpha}}\log \frac{1}{h}\right\},
$$
for any $\ct \ge c_1$, and $p \ge 1$ with probability at least $1- c\exp\left(-{\oldl{cfpf}}\ct  \right)$.
Then the remaining argument is similar to those in the proof of Theorems \ref{thm: prep} and \ref{thm: decomp}.

\subsection{Proof of Theorem \ref{thm: aug}}
\label{sec:proof_aug}

\begin{proof}
	Following the same proof structure of Theorem \ref{thm: prep}, by replacing $m$ with a constant function $z$ of value 1, we have
	\begin{align*}
		S_{N,h}(\hat w, z) + \lambda_1 \|u^*\|_\Hscr^2 	+ \lambda_2 R_{N,h}(\hat w) \leq S_{N,h}(w^*, u^*) + \lambda_1 \|z\|_\Hscr^2 + \lambda_2 R_{N,h}(w^*).
	\end{align*}
	By the condition of $\lambda_1$ such that $\lambda_1^{-1} = \bigOp(N^{-1}h^{-d_1})$, we have
	$R_{N,h}(\hat w) = \bigOp(\lambda_2^{-1} \lambda_1 + h^{-d_1}) $. Since $\lambda_2^{-1}\lambda_1 = \bigO(h^{-d_1})$,
	\begin{align*}
		R_{N,h}(\hat w) = \bigOp(h^{-d_1}).
	\end{align*}

	Again, following the same proof structure of Theorem \ref{thm: prep}, by replacing $m$ with $\hat e$,
	we have
	\begin{align*}
		S_{N,h}(\hat w, \hat e) + \lambda_1 \|u^*\|_\Hscr^2 \|\hat e\|_N^2 	+ \lambda_2 R_{N,h}(\hat w)\|\hat e\|_N^2 \leq S_{N,h}(w^*, u^*)\|\hat e\|_N^2 + \lambda_1 \|\hat e\|_\Hscr^2 + \lambda_2 R_{N,h}(w^*) \|\hat e\|_N^2.
	\end{align*}
	By the condition of $\lambda_1$ such  that $\lambda_1^{-1} = \bigOp(N^{-1}h^{-d_1})$, we can obtain
	\begin{align*}
		S_{N,h}(\hat w, e) = \bigOp(\lambda_1\|e\|_N^2 + \lambda_1 \|e\|_{\Hscr}^2 + \lambda_2 h^{-d_1} \|e\|_N^2).
	\end{align*}
	Therefore,
	\begin{align*}
		     & \left\|\frac{1}{N} \sum_{i=1}^N  \tilde K_h (V_i, \cdot) \hat m(X_i) + \frac{1}{N} \sum_{i=1}^N T_i \hat w_i \tilde K_h (V_i, \cdot) \{ Y_i - \hat m(X_i)\} -\E\left\lbrace Y(1) | V= \cdot\right\rbrace  \right\|_2 \\
		\leq & \left\| \frac{1}{N}\sum_{i=1}^{N} (T_i\hat w_i-1) K_h\left(  V_i,\cdot \right)e(X_i) \right\|_2 + \left\|\frac{1}{N} \sum_{i=1}^N T_i\hat w_i K_h(V_i,\cdot )\varepsilon_i  \right\|_2                   \\
    		&\quad + \left\| \frac{1}{N}\sum_{i=1}^{N}m(X_i)K_h(V_i,\cdot )  - \E \left\lbrace Y(1) \mid  V= \cdot \right\rbrace  \right\|_2                                                                                  \\
		\leq & \{S_{N,h}(\hat w, e)\}^{1/2} + \bigOp(N^{-1/2})R^{1/2}_{N,h}(\hat w) + \bigOp(N^{-1/2}h^{-d_1/2})                                                                                        \\
		\leq & \bigOp(\lambda_1^{1/2}\|e\|_N + \lambda_1^{1/2} \|e\|_{\Hscr} + \lambda_2^{1/2} h^{-d_1/2} \|e\|_N) + \bigOp(N^{-1/2}h^{-d_1/2})                              \\
		\leq & \bigOp(N^{-1/2}h^{-d_1/2} + \lambda_1^{1/2}\|e\|_N + \lambda_1^{1/2} \|e\|_{\Hscr} + \lambda_2^{1/2} h^{-d_1/2} \|e\|_N).
	\end{align*}

\end{proof}

\bibliographystyle{chicago}

\bibliography{refer2.bib}

\makeatletter\@input{xx.tex}\makeatother